\documentclass[aps,nofootinbib,superscriptaddress,tightenlines,11pt]{revtex4}
\usepackage[utf8]{inputenc}
\usepackage{graphicx, float, amsmath, physics}
\usepackage{amsmath, amssymb, amsfonts, latexsym, amsthm, bbm, braket}
\usepackage[normalem]{ulem}

\usepackage{enumerate,enumitem} 
\usepackage{dsfont} 
\usepackage{tcolorbox}

\usepackage{verbatim}

\usepackage{ragged2e}
\usepackage[a4paper, total={6in, 8in}]{geometry}
\usepackage{tikz}
\graphicspath{ {./images/} }

\usepackage{hyperref}
\hypersetup{
    colorlinks = true,
    allcolors=blue}
\usepackage{cleveref}

\def\A{\mathcal{A}}
\def\B{\mathcal{B}}
\def\E{\mathcal{E}}

\def\K{\mathcal{K}}
\def\N{\mathcal{N}}

\def\a{\mathbf a}
\def\k{\mathbf k}
\def\t{\mathbf t}
\def\s{\mathbf s}
\def\z{\mathbf z}

\def\r{\mathsf r}

\def\e{\mathsf e}
\def\c{\mathsf c}
\def\q{\mathsf q}
\def\a{\mathbf a}
\def\b{\mathbf b}
\def\x{\mathbf x}
\def\y{\mathbf y}
\def\lo{\mathsf{rest}}
\def\z{\mathbf z}
\def\br{\mathbf r}
\def\bv{\mathbf v}

\def\ZO{\{0,1\}}

\newcommand{\unity}{\ensuremath{\mathbbm 1}}
\def\Span{\mathrm{span}\hspace{0.05cm}}
\newcommand{\proj}[1] {|#1 \rangle\!\langle #1|}
\DeclareMathOperator*{\EE}{\mathbb{E}}

\theoremstyle{definition}
\newtheorem*{theorem}{Theorem}
\newtheorem{lemma}{Lemma}

\begin{document}

\title{
High-rate and computationally-efficient seedless extractors for device-independent quantum cryptography
}

\author{Simone Lin}
\email{simone.lin.23@ucl.ac.uk}
\affiliation{Department of Computer Science, University College London, United Kingdom}

\author{Cameron Foreman}
\email{cameron.foreman@quantinuum.com}
\affiliation{Department of Computer Science, University College London, United Kingdom}
\affiliation{Quantinuum, Partnership House, Carlisle Place, London SW1P 1BX, United Kingdom}

\author{Llu\'is Masanes}
\email{l.masanes@ucl.ac.uk}
\affiliation{Department of Computer Science, University College London, United Kingdom}
\affiliation{London Centre for Nanotechnology, University College London, United Kingdom}

\begin{abstract}
  Device-independent (DI) quantum cryptography provides secure cryptography with minimal trust in, or characterisation of, the used quantum devices. An essential component of DI protocols is the use of randomness extractors for privacy amplification, but these typically require an initial seed of randomness that introduces a potential vulnerability. To solve this problem, the security of seedless extractors was proven in \href{https://doi.org/10.22331/q-2025-03-06-1654}{Quantum 9, 1654 (2025)}. The core idea was to use the Bell violation of the raw data, rather than its min-entropy, as the extractor promise. However, the large fluctuations in the Bell inequality used required many rounds to precisely estimate the Bell violation, consuming substantial randomness and making the protocol very inefficient. In this work, we present a new proof technique based on a truncation method that allows the user to estimate the protocol parameters with an asymptotically vanishing fraction of rounds and, as a consequence, achieves the optimal rate of one key bit per singlet. Notably, we prove this result using seedless extractors that can be implemented efficiently.
\end{abstract}
\maketitle

\section{Introduction}
Unlike traditional quantum cryptography, device-independent (DI) quantum cryptography provides security with minimal trust in, or characterisation of, the quantum hardware used to implement the protocol. This is achieved through security proofs that exploit quantum non-locality---observing correlations that violate a Bell inequality \cite{Bell1964}---rather than relying on models of the devices' internal workings.
By removing assumptions about the devices themselves, DI protocols remain secure even in the presence of hardware noise and performance fluctuations.
As a result, DI protocols have been developed for a wide range of cryptographic tasks, including secure key distribution \cite{Pironio_2009, Masanes2011,Pironio_2013,Vazirani_2014}, certified randomness generation \cite{pironio2010random,Colbeck_2011,Ac_n_2016}, and randomness amplification \cite{Colbeck_2012,brandao2016realistic,kessler2020device,foreman2023practical}.

A key step in many DI protocols is randomness extraction (e.g., for privacy amplification), in which the measurement outcomes are classically processed by a randomness extractor to produce uniformly distributed bits that are uncorrelated with any potential adversary.
To operate, randomness extractors require a statistical promise on their input; in DI protocols, this is traditionally a bound on the min-entropy of the measurement outcomes. When min-entropy is used as the extractor's promise, the extraction process necessarily requires a random \textit{seed} that is sufficiently unpredictable and independent \cite{santha1986generating}. In reality, the required resource for extraction can be even more demanding, e.g., a seed that is perfectly uniform and fully independent \cite{Ma_2013, Foreman_2025}.

However, seedless extraction becomes possible in DI quantum cryptography when the extractor relies on a statistical promise other than min-entropy. The first proof of deterministic extraction in the DI setting was given in \cite{Masanes_2009}. Subsequent work used semidefinite programming to show that the XOR of two bits, as well as several subsets of bits, is a deterministic extractor for outputs generated by independent, memoryless devices \cite{Hanggi_2009, Hanggi_2010, Hanggi_2010_thesis}. 
In our previous work \cite{seedless}, we showed that deterministic extraction is possible for DI protocols with memoryless measurement devices by constructing extractors whose security guarantee depends directly on the observed Bell-inequality violation, giving a full protocol and rate analysis. However, several important limitations remained. First, it used seedless extractors whose existence was proven theoretically, but for which efficient implementations (i.e., polynomial time computable) were not known. Second, the protocol was highly inefficient, generating fewer random bits than it consumed to estimate the Bell-inequality violation, severely limiting its practical utility.

In this work, we show that linear functions (i.e., functions represented by binary matrices) can be used as seedless extractors and, using our new proof technique, we find specific linear functions that achieve the optimal rate of one bit per singlet for the task of one-party DI randomness generation, provided the number of experimental rounds is sufficiently large. The key technical ingredient behind this advancement is a new proof technique based on a truncation method.  
The intuition is as follows: in proving security, we aim to bound the distinguishability (measured by the trace norm) between the state produced by the protocol and an ideal uniformly random state. In our earlier work, this required bounding the expectation of an $n$-fold product of Bell inequalities, which was highly sensitive to small perturbations in the global state. As a result, a large number of rounds had to be consumed to ensure that this expectation was below a desired threshold with high probability. In the present work, we overcome this challenge by truncating the initial state into a subspace giving large probability for the obtained estimation outcome, so that the expectation is directly bounded. The main difficulty is then in defining an appropriate truncation, since the eigen-basis of the estimation measurements differ from those generating the raw key. We present our results in the context of one-party DI random number generation using the CHSH inequality so that they can be readily adapted to other DI protocols and tasks. 

The remainder of this manuscript is organised as follows. In Section~\ref{sec:main_results}, we present our main results by giving a description of the protocol, introducing the setup and notation, stating our main theorem, and including a plot of the protocol rates. In Section~\ref{sec:proof}, we provide a detailed proof of the main theorem using our truncation method. This proof relies on several lemmas, which are themselves proved in the Appendix.

\section{Main result} \label{sec:main_results}

\subsection{Protocol for one-party random number generation}
\label{protocol_description}
The protocol involves two non-communicating and memoryless quantum devices, Alice and Bob, and one user who interacts with them and performs the following steps.
\begin{enumerate}[leftmargin=*]
  \item \textbf{Set parameters.} Choose:
  \begin{itemize}[leftmargin=7mm]
    \item Number of rounds, $n \in \mathbb{Z}^+$, which fixes the estimation probability $p_\e = n^{-1/3}$.
    \item Threshold winning frequency for not aborting, $w_\e \in (w_\c + \Delta, w_\q]$, where $w_\c = 3/4$ and $w_\q = \cos^2(\pi/8) = (2+\sqrt{2})/4$ are the maximum winning probabilities of the CHSH game for classical and quantum strategies, respectively.
    \item Error tolerance for the final key, $\epsilon \in (0,1]$, defined in \eqref{eq:sec cond}, which fixes the statistical uncertainty buffer $\Delta =n^{-1/3} \sqrt{2\ln(8 /\epsilon)}$.
  \end{itemize}

  \item \textbf{Initialisation.} With probability $p_\e$, independently decide for each $i \in \{1, \ldots, n\}$ whether round $i$ is used for estimation or raw-key generation. Then:
    \begin{itemize}[leftmargin=7mm]
        \item Record the set of estimation rounds $\N_\e \subseteq \{1, \ldots, n\}$ and its size $n_\e$.
        \item Record the set of generation rounds $\N_\r = \{1, \ldots, n\} \backslash \N_\e$ and its size $n_\r = n - n_{\e}$.
        \item Compute the final key length $m(w_\e, \epsilon, n)$ using formula~\eqref{main result}.
        \item Choose a binary $m\times n_\r$ extractor matrix $G$ satisfying condition~\eqref{extract condit}.
    \end{itemize}

    \item \textbf{Round execution.} In each round $i = 1, \dots, n$: 
    \begin{itemize}[leftmargin=7mm]
        \item If $i\in \N_\r$: 
        \begin{itemize}[leftmargin=7mm]
            \item Send Alice and Bob the inputs $(x_i, y_i) = (0, 0)$. 
            \item Alice returns $a_i \in \{0,1\}$, which will be part of the raw key.
        \end{itemize}
        \item If $i\in \N_\e$: 
        \begin{itemize}[leftmargin=7mm]

            \item Send Alice and Bob uniformly random inputs $(x_i, y_i) \in \{0,1\}^2$.
            \item Alice and Bob return $(a_i, b_i) \in \{0,1\}^2$. 
            \item Record the variable $z_i= a_i+b_i+x_iy_i +1 \bmod 2$, which indicates whether the CHSH game is won ($z_i=1$) or lost ($z_i=0$).
        \end{itemize}
    \end{itemize}

    \item \textbf{Verification and extraction.} After $n$ rounds, the raw key is $\mathbf{a}_\r = (a_i)_{i \in \mathcal{N}_\r} \in \{0,1\}^{n_{\r}}$, and the estimation data is $\z_\e = (z_i)_{i \in \mathcal{N}_\e} \in\ZO^{n_\e}$. 
    \begin{itemize}[leftmargin=7mm]

        \item If the number of winning estimation rounds, $|\z_\e|=\sum_{i \in \mathcal{N}_\e} z_i$, meets or exceeds the chosen threshold $|\z_\e|\geq n_\e w_\e$ then compute the final key ($\mathbf{k} = G\mathbf{a}_\r$) and end the protocol. 
        \item If $|\z_\e| < n_\e w_\e$ then abort the protocol.
    \end{itemize}
\end{enumerate}

\subsection{Setup and notation}

The protocol consists of $n$ rounds labelled by $i\in\{1, \ldots, n\}$. In round $i$, Alice and Bob use a pair of quantum systems with Hilbert spaces $\A_i$ and $\B_i$. System $\A_i$ is measured with one of two observables indexed by  $x_i \in \{0,1\}$, described by POVM elements $A_i(a_i|x_i)$, each producing the outcome $a_i\in \ZO$. Analogously, $\B_i$ is measured with the two observables described by $B_i(b_i|y_i)$, for $b_i, y_i\in \ZO$. The total Hilbert spaces of Alice, Bob and the adversary, Eve, are $\A= \bigotimes_i \A_i$, $\B= \bigotimes_i \B_i$ and $\E$, and their initial joint state is $\rho_{\A\B\E}$.
The assumption that the measurement devices have no internal memory is formalised by using separate Hilbert spaces $\A_i \B_i$ and measurements $A_i(a|x), B_i(b|y)$ in every round $i$.

In each round $i$, an independent random variable is generated with probabilities $p_\e$ and $p_\r=1-p_\e$, determining whether the round $i$ is used for estimation ($\e$) or for raw-key generation ($\r$). 
The rounds used for estimation are recorded in the set $\N_\e \subseteq \{1, \ldots, n\}$ which has cardinality $n_\e$, and the rounds used for raw key in  $\N_\r = \{1, \ldots, n\} \backslash \N_\e$ with cardinality $n_\r =n-n_\e$. 
The Hilbert spaces of the estimation systems are $\A_\e = \bigotimes_{i\in\N_\e} \A_i$ and $\B_\e = \bigotimes_{i\in\N_\e} \B_i$, while those of raw-key systems are $\A_\r = \bigotimes_{i\in\N_\r} \A_i$ and $\B_\r = \bigotimes_{i\in\N_\r} \B_i$; with $\A=\A_\e \otimes \A_\r$ and $\B=\B_\e \otimes \B_\r$.

In each estimation round $i\in\N_\e$, the probability of obtaining $z_i$ is given by the expectation of the CHSH observable
\begin{align}\label{def:H_z}
  H_i(z_i) = \sum_{a_i,b_i,x_i,y_i}\frac 1 4 \,\delta^{z_i + 1}_{a_i + b_i + x_i y_i} A_i(a_i|x_i) B_i(b_i|y_i)\ ,
\end{align}
where $\delta$ denotes the Kronecker delta, producing the outcome $z_i=1$ when the CHSH game is won and $z_i=0$ when it is lost. Note that the algebra of the indices of $\delta$ is modulo 2, and that the operator $A_i(a_i|x_i)$ acts as the identity on all factors of the Hilbert space except $\A_i$, allowing us to write \eqref{def:H_z} without the tensor-product symbol (i.e., not writing $A_i(a_i|x_i) \otimes B_i(b_i|y_i)$). For more detail on the CHSH inequality, operator and observable, see \cite{seedless, nielsen2000quantum}.

For any set of estimation rounds $\mathcal{N}_\e$, the vector $\mathbf{z}_\e \in \{0,1\}^{n_\e}$ contains the results of the CHSH game on those rounds, and its Hamming weight $|\mathbf{z}_\e|$ gives the number of winning rounds.
This information determines one of the following events: 
\begin{align}
  &\bar{\mathfrak a} =\mbox{``protocol not aborted", if }|\z_\e|\geq n_\e w_\e\ ,
  \\
  &\mathfrak a =\mbox{``protocol aborted", if } |\z_\e| < n_\e w_\e\ , 
\end{align}
where $w_\e \in (w_\c, w_\q]$ is a parameter fixed at the start of the protocol.

The state of the key-generation rounds conditioned on not aborting the protocol is 
\begin{align}
  \rho_{\A_\r \B_\r \E|\bar{\mathfrak a}} 
  =\frac {\tr_\e(\rho_{\A\B\E} W_\e)} {\tr(\rho_{\A\B\E} W_\e)}\ ,  
\end{align}
where we use the notation $\tr_\e = \tr_{\A_\e \B_\e}$ and the no-abort POVM element
\begin{align}\label{def:W_e}
  W_\e =\sum_{\z_\e: |\z_\e|\geq n_\e w_\e}\  \prod_{i\in \N_\e} H_i(z_i)\ .
\end{align}
Instead of conditioning on $\mathfrak a$ or $\bar{\mathfrak a}$ we will use the subnormalised states 
\begin{align}
  \rho_{\A_\r \B_\r \E,\bar{\mathfrak a}} &= p(\bar{\mathfrak a}) \rho_{\A_\r \B_\r \E|\bar{\mathfrak a}} = \tr_\e(\rho_{\A\B\E} W_\e)\ ,
  \\
  \rho_{\A_\r \B_\r \E, \mathfrak a} &= p(\mathfrak a) \rho_{\A_\r \B_\r \E|\mathfrak a} = \tr_\e(\rho_{\A\B\E} [\unity-W_\e])\ ,
\end{align}
where $p(\bar{\mathfrak a}) = \tr(\rho_{\A\B\E} W_\e)$ and $p(\mathfrak a) = \tr(\rho_{\A\B\E} [\unity-W_\e])$ are the probabilities of not aborting and aborting respectively.

In each raw-key round $i\in\N_\r$, the system $\A_i$ is measured with $A_i(a_i|0)$ and the outcome $a_i$ is produced. After all measurements, the raw key vector $\a_\r \in \{0,1\}^{n_\r}$ containing all generation-round outcomes $a_i$ is multiplied by a full row-rank $m \times n_\r$ matrix $G$ (with entries in $\{0,1\}$) to produce the secret key $\k = G \a_\r \ (\bmod\ 2)$, where $m \leq n_\r$. Notably, this extraction step can be computed in just $O(mn_{\r})$ time.
For our main result, the matrix $G$ must be chosen so that no vector $\br \in \Span G$, where $\Span G \subseteq {0,1}^{n_\r}$ denotes the row space of $G$, has weight $|\br|$ such that
\begin{align}\label{extract condit}
    m < n_\r - \log_2 \binom{n_\r}{|\br|} - \log_2(2n_\r)\ ,
\end{align}
where $m$ is the final key length output by the protocol, computed using formula~\eqref{main result}.
The existence of such matrices is proven in Lemma~\ref{lem: 5}.

The completely positive trace-preserving (CPTP) map representing this process,
$\Upsilon_\r : \mathfrak{B}(\A_\r \B_\r) \to \mathfrak{B}(\K)$, where
$\mathfrak{B}(\mathcal{H})$ denotes the set of bounded linear operators on a Hilbert space $\mathcal{H}$, is defined as follows and is understood to act as the identity on any additional registers.
\begin{align}\label{eq:upsilon}
  \Upsilon_\r(\rho_{\A_\r \B_\r})
  = \sum_{\k, \a_\r} \proj{\k}_{\K}\, \delta_{G(\a_\r)}^{\k}\,
  \tr_{\r}\!\left[
    \rho_{\A_\r \B_\r}
    \prod_{i \in \N_\r} A_i(a_i | 0)
  \right].
\end{align}
where $\ket \k \in \K$ for all $\k\in\ZO^m$ is an orthonormal basis representing the classical register which contains the secret key.  
When the protocol is not aborted, the final global state is 
\begin{align}\label{eq:final no ab}
  \rho_{\K\E,\bar{\mathfrak a}|\N_\e} = \Upsilon_\r (\tr_\e (\rho_{\A\B\E} W_\e))\ .   
\end{align}
By conditioning on a fixed realisation of $\N_\e$, we can use the Hilbert-space factorisations $\A = \A_\e \otimes \A_\r$ and $\B = \B_\e \otimes \B_\r$, with fixed dimensions for $\a_\r$, $\z_\e$, and the matrix $G$.

\subsection{Security condition and theorem}

For our security proof, we want to consider the case in which Eve learns the estimation set $\N_\e$ after the protocol's round-execution step. In this scenario, the actual state produced by the protocol can be written as
\begin{align}\label{eq:actual}
  \rho^\mathsf{actual}_{\K\E\mathcal R,\bar{\mathfrak a}} = \EE_{\N_\e} \rho_{\K\E,\bar{\mathfrak a}|\N_\e}\, \proj{\N_\e}_{\mathcal R}\ , 
\end{align}
where we have taken the state in \eqref{eq:final no ab} and added a classical register $\mathcal R$ specifying the estimation subset in an orthonormal basis $\{\ket{\N_\e}\}$.
We will show that the protocol is secure even if Eve learns $\mathcal R$ after the implementation of the protocol. So Eve's final system is $\E\mathcal R$.

In contrast to the actual key \eqref{eq:actual}, an ideal (subnormalised) secret key state is
\begin{align}\label{def:ideal}
  \rho^\mathsf{ideal}_{\K\E\mathcal R,\bar{\mathfrak a}} = u_{\K} \EE_{\N_\e} \rho_{\E,\bar{\mathfrak a}|\N_\e}\, \proj{\N_\e}_{\mathcal R}\ ,
\end{align}
where
\begin{align}
  u_{\K} = 2^{-m} \sum_\k \proj{\k}_{\K}
\end{align}
is the uniform (maximally mixed) state on the classical register associated
with the final key $\mathcal{K}$, and $\rho_{\E,\bar{\mathfrak a}|\N_\e}$ is the $\K$-trace of~\eqref{eq:final no ab}.
In general, the states in \eqref{eq:actual} and \eqref{def:ideal} are not equal, but the standard security condition in quantum cryptography \cite{renner2005, portmann2022} is to impose that they are near-indistinguishable, that is,
\begin{align}\label{eq:sec cond}
  \left\|\rho^\mathsf{actual} _{\K\E\mathcal R,\bar{\mathfrak a}} -\rho^\mathsf{ideal} _{\K\E\mathcal R,\bar{\mathfrak a}} \right\|_1 
  \leq \epsilon\ ,
\end{align}
where the error parameter $\epsilon>0$ can be chosen by the user.
This security condition guarantees that using the actual key in any application will produce the same statistics as an ideal key, up to an event with probability $\epsilon$. This feature is called universally-composable security \cite{canetti2001universally}.

For small $\epsilon$, condition~\eqref {eq:sec cond} implies that the protocol either aborts with high probability or produces a key that is indistinguishable from uniform.
In reality, our protocol can be implemented with high no-abortion probability. For example, if we have an honest implementation that uses the same state $\rho_{\A_1\B_1}$ and the same measurements $A_1(a|x), B_1(b|y)$ in every round, and choose a value of $w_\e$ strictly smaller than $\tr(\rho_{\A_1\B_1} H_1(1))$, then the abort probability becomes exponentially small in $n$.

We now state the main result of this work. To do so, we use the modified
binary entropy function
\begin{align}
    \tilde h(q)
    :=
    \begin{cases}
        h(q),
        &
        0
    \le
    q< \frac{1}{np_\r}\Bigl\lfloor\frac{np_\r}{2}-\sqrt{\frac{n}{8}\ln\frac4\epsilon}\Bigr\rfloor
        \\[1.2ex]
        1,
        &
        q\geq 
        \frac{1}{np_\r}\Bigl\lfloor\frac{np_\r}{2}-\sqrt{\frac{n}{8}\ln\frac4\epsilon}\Bigr\rfloor
    \end{cases}\ ,
    \label{eq: modified_h}
\end{align}
where $h(q) = - q\log_2 q - (1-q) \log_2 (1-q)$ is the binary entropy function and $\lfloor q\rfloor = \max\{j\in\mathbb Z: j\leq q\}$ denotes the floor of a real number $q$. Note that, for any fixed $\epsilon$, $\tilde h(q)$ converges to $h(q)$ as $n\to\infty$ for all $0\leq q<1/2$.
\begin{theorem}
  For any choice of the parameters $n\in \mathbb Z^+$, $\epsilon \in (0,1]$ and $w_\e \in (w_\c + \Delta, w_\q]$, the protocol \ref{protocol_description} with key length
  \begin{align}\label{main result}
    m
    =
    \left\lfloor np_\r
    \left[
        1-
        \tilde h\!\left(\frac {2\sqrt 2 +4 -8w_\e + 8\Delta} {p_\r (\sqrt 2 -1)}
        \right)
    \right]
    -
    \sqrt{\frac n2\ln\frac4\epsilon}
    -
    \log_2\!\left(np_\r\right)-4\right\rfloor\ ,
\end{align}
  and parameter values $p_\r = 1-n^{-1/3}$ and $\Delta = n^{-1/3} \sqrt{2\ln(8 /\epsilon)}$, produces a no-abortion state $\rho^\mathsf{actual}_{\K\E\mathcal R,\bar{\mathfrak a}}$ which satisfies the security condition \eqref{eq:sec cond}.
\end{theorem}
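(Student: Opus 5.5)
The plan is to bound the trace distance \eqref{eq:sec cond} with a Fourier (XOR-lemma) argument over the row space of $G$, after first truncating the global state onto a subspace where the Bell violation is high. The truncation is what lets us control the needed expectation of products of correlators.

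\textbf{Step 1 (conditioning on the estimation set).} First fix a realisation of $\N_\e$. By convexity it suffices to bound $\|\rho_{\K\E,\bar{\mathfrak a}|\N_\e} - u_\K\rho_{\E,\bar{\mathfrak a}|\N_\e}\|_1$ on average over $\N_\e$. A Chernoff/Hoeffding bound puts $n_\r$ in a window around $np_\r$ except with probability at most $\epsilon/4$; this window accounts for the floor in $\tilde h$ and for the $\sqrt{(n/2)\ln(4/\epsilon)}$ term in $m$.

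\textbf{Step 2 (XOR lemma).} For a linear extractor, the key is uniform and independent of $\E$ iff every parity $\br\cdot\a_\r$ with $\br\in\Span G\setminus\{0\}$ is. The standard quantum XOR-lemma bound then gives
\begin{align*}
\|\rho_{\K\E}-u_\K\rho_\E\|_1\le \sum_{\br\in\Span G\setminus\{0\}}\Big\|\tr_\r\Big[\rho\prod_{i:r_i=1}\big(A_i(0|0)-A_i(1|0)\big)\Big]\Big\|_1 .
\end{align*}
The sum has at most $2^m$ terms. Each term should be controlled by a product of single-round "bias" operators $\prod_{i:r_i=1}\bar A_i$, with $\bar A_i=A_i(0|0)-A_i(1|0)$.

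\textbf{Step 3 (truncation; the main obstacle).} In the earlier work these products were bounded via $n$-fold products of CHSH operators, whose expectation fluctuates wildly. Instead, I will use the local CHSH structure: by Jordan's lemma each round decomposes into qubit blocks, and in each block one has an operator inequality of the form $|\bar A_i|$-bias $\le f(H_i(1))$ linking Alice's key bias to the CHSH winning probability. Concretely, I expect to prove a lemma stating that the probability of guessing $a_i$ is $\le \tfrac12 + \tfrac12\sqrt{\cdot}$, linearised so that the "loss" $\frac{2\sqrt2+4-8w}{\sqrt2-1}$ appears.

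I then define a projector $\Pi$ onto the span of Jordan blocks in which the number of raw-key rounds with "bad" (low-violation) blocks is at most $q\,n_\r$, where $q=\frac{2\sqrt2+4-8w_\e+8\Delta}{p_\r(\sqrt2-1)}$. The key estimate is that $W_\e$ almost annihilates the complement of $\Pi$: $\tr[\rho W_\e(\unity-\Pi)]\lesssim\epsilon/8$. This needs a sampling argument, which is where $\Delta=n^{-1/3}\sqrt{2\ln(8/\epsilon)}$ and $p_\e=n^{-1/3}$ enter. The difficulty is that the estimation measurements are not diagonal in the key basis. I would handle this by defining badness through Jordan-block angles, which commute with both measurements of a round, so that $\Pi$ commutes with every $H_i$ and every $A_i(\cdot|0)$. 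I would then apply Hoeffding over the random choice of estimation rounds given a fixed block configuration. The gentle-measurement lemma converts this into a trace-norm cost of $O(\sqrt{\epsilon})$ or linear in $\epsilon$, whichever the lemma gives.

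\textbf{Step 4 (counting).} Inside $\Pi$, a parity $\br$ has nonzero expectation only through its support on bad rounds; good rounds contribute a factor near $0$. Summing over configurations, each $\br$ contributes at most about $\binom{n_\r}{\le qn_\r}\binom{n_\r}{|\br|}^{-1}$-type weight, or more precisely $2^{n_\r h(q)}\cdot 2^{-(n_\r-\log\binom{n_\r}{|\br|})}$. Condition \eqref{extract condit}, together with the existence of $G$ from Lemma~\ref{lem: 5}, then makes the sum over the $2^m$ row vectors at most $2^{m-n_\r(1-\tilde h(q))+\log_2(2n_\r)}$. The choice of $m$ in \eqref{main result}, including the $-4$ and $-\log_2(np_\r)$, makes this at most $\epsilon/4$. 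Adding the error contributions from Steps 1–4 yields $\epsilon$.
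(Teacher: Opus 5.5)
\textbf{There is a genuine gap, in Steps~3 and~4.}

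You define $\Pi$ through Jordan-block angles precisely so that it commutes with every $H_i$ and every $A_i(\cdot|0)$. Such a $\Pi$ is then a sum of block projectors. It acts as the identity inside every block it keeps, and with $C_i=A_i(0|0)-A_i(1|0)$ it satisfies $\Pi\prod_i C_i^{r_i}\Pi=\prod_i C_i^{r_i}\Pi$. So the truncation does nothing to the parity operators and carries no information about the state inside a block.

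This makes the claim in Step~4, that ``good'' rounds contribute a factor near $0$, false. Take optimal CHSH angles in every round and a mixture, over random position sets $S$ with $|S|=fn$, of $\ket{+}_{\A_i}\ket{0}_{\B_i}$ on $S$ and singlets elsewhere.
\begin{itemize}
\item For $w_\e<w_\q$ and small enough $f$, this state passes the test with high probability.
\item It lies entirely in the range of $\Pi$, since there are no bad rounds at all.
\item Yet a raw-key parity $\br$ has conditional expectation about $\Pr[\mathrm{supp}\,\br\subseteq S]$, e.g.\ about $f$ for $\br=e_i$.
\end{itemize}
The only thing that can exclude such components is $W_\e$. Using it means bounding expectations of $n$-fold products under $\rho W_\e$, which is exactly what the truncation was meant to avoid. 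The single-round operator inequality you mention in Step~3 does not rescue this: it controls one-round expectations and does not multiply across rounds for entangled states or for the operator $\rho W_\e$. Consequently the weight estimate in Step~4, and the resulting $2^{m-n_\r(1-\tilde h(q))+\log_2(2n_\r)}$ bound, have no derivation behind them.

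\textbf{The missing idea} is that the truncation must be state-dependent and must \emph{not} commute with the key observables.
\begin{itemize}
\item The paper truncates, over all $n$ rounds and independently of $\N_\e$, onto $T_w=\sum_{(\t,\s)\in\mathcal T_w}\proj{\Phi_{\t,\s}}$. This is spanned by products of eigenvectors $\ket{\phi_{t_i,s_i}}$ of the per-round CHSH operators $H_i(1)$, keeping only label strings whose frequencies satisfy the Bell-violation inequality with $w=w_\e-\Delta$.
\item $T_w$ commutes with $W_\e$. The truncation error is therefore handled by gentle measurement plus an exponential-moment bound averaged over the random $\N_\e$, giving $4e^{-np_\e\Delta^2/2}=\epsilon/2$.
\item The decisive structural fact is Lemma~\ref{lem: 3}: $C_i$ flips the label $t_i$, i.e.\ $\bra{\phi_{t,s}}C\ket{\phi_{t',s'}}=0$ unless $t'=t+1$. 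Every $(\t,\s)\in\mathcal T_w$ has $|\t|\le n\tau(w)$ (Lemma~\ref{lem: 4}). Hence $T_w\prod_i C_i^{r_i}T_w=0$ whenever $|\br|>2n\tau(w)$.
\item The factor $2$ in $2\tau(w)/p_\r$ comes from the bra and the ket, not from counting bad raw-key rounds.
\item Choose $G$ whose row space contains no nonzero vector of weight $\le 2n\tau(w)$ (Lemma~\ref{lem: 5}). The truncated term then vanishes \emph{identically} whenever $n_\r>n_\r^*$, with no summing of small contributions over $2^m$ codewords.
\item The entropy enters only through $\log_2\binom{n_\r}{\lfloor 2n\tau(w)\rfloor}\le n_\r h(\cdot)$ in the existence of such a code. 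The event $n_\r\le n_\r^*$ is paid for by Hoeffding, which costs the remaining $\epsilon/2$.
\end{itemize}
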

If $m \leq 0$ or no valid choice of $w_\e$ exists, no key can be produced.
We also note that the validity of \eqref{eq:sec cond} is independent of the choice of Hilbert spaces $\A_i, \B_i, \E$, the initial state $\rho_{\A\B\E}$ and the measurements $A_i(a_i | x_i)$ and $B_i(b_i | y_i)$, although these choices can influence the probability of aborting.

For simplicity, we fix the parameters $\Delta$ and $p_{\r}$ to the values stated above. However, these choices are not unique, and further optimisation could substantially improve performance for small $n$, which we leave for future work. Even so, with these selections, a positive key length becomes possible for maximal CHSH violation whenever $n \geq 572{,}998$. 

The key length \eqref{main result} gives the asymptotic key rate
\begin{align}
  R(w_\e) = \lim_{n \to \infty} \frac{m}{n} \geq 1 -  h\!\left[ \frac{2\sqrt 2 + 4 - 8 w_\e}{\sqrt 2 - 1} \right]\ ,
\end{align}
for any constant $\epsilon$. This rate is shown in Figure~\ref{fig:rate}. 
A positive rate is obtained whenever the threshold winning frequency for the CHSH game exceeds $w_\e = \tfrac{9 + 3\sqrt 2}{16} \approx 0.828$ (corresponding to a CHSH value of roughly $2.621$), with a rate of $1$ at maximal CHSH violation.

\begin{figure}[h!]
  \centering
  \includegraphics [width=0.7\linewidth]{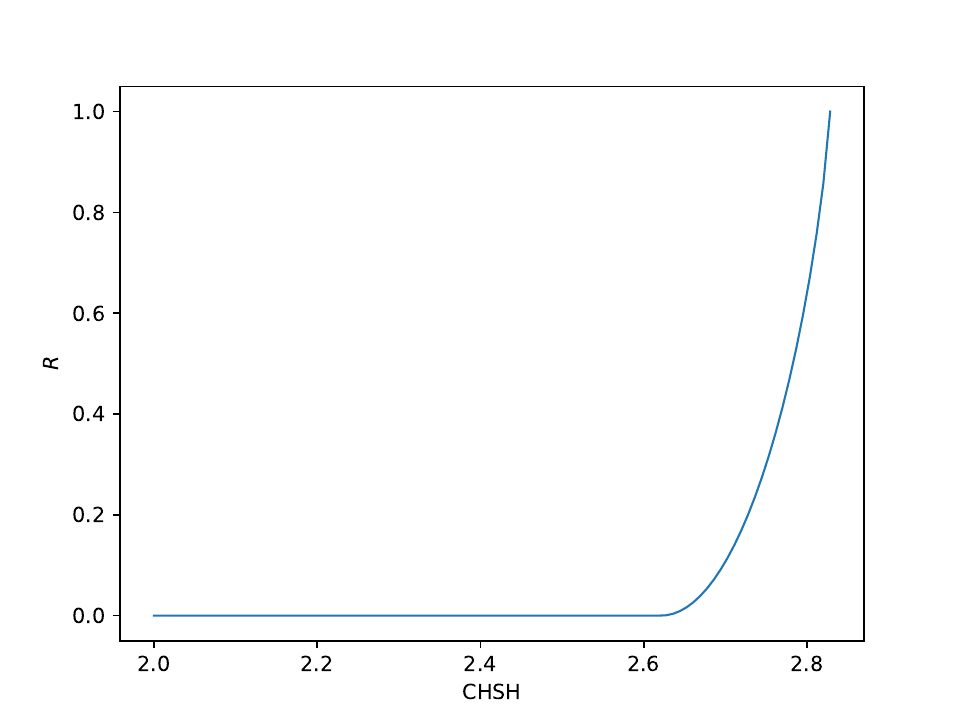}
  \caption{Key rate $R$ versus CHSH value ($8 w_\e-4$) in the large-$n$ regime.}
  \label{fig:rate}
\end{figure} 

\section{Proof of the theorem}
\label{sec:proof}
In this section, we prove our main theorem. First, we note that the block-diagonal structure of the operator
\begin{align} \label{eq: sec-op}
  &\rho^\mathsf{actual}_{\K\E\mathcal R,\bar{\mathfrak a}}
  -
  \rho^\mathsf{ideal}_{\K\E\mathcal R,\bar{\mathfrak a}} =
  \EE_{\N_\e}
  \left(
    \rho_{\K\E,\bar{\mathfrak a}|\N_\e}
    -
    u_{\K} \rho_{\E,\bar{\mathfrak a}|\N_\e}
  \right)\!
  \proj{\N_\e}_{\mathcal R}\ ,
\end{align}
implies 
\begin{align}
  \left\|
    \rho^\mathsf{actual}_{\K\E\mathcal R,\bar{\mathfrak a}}
    -
    \rho^\mathsf{ideal}_{\K\E\mathcal R,\bar{\mathfrak a}}
  \right\|_1
  =
  \EE_{\N_\e}
  \left\|
    \rho_{\K\E,\bar{\mathfrak a}|\N_\e}
    -
    u_{\K} \rho_{\E,\bar{\mathfrak a}|\N_\e}
  \right\|_1\ .
\end{align} 
In the first part of the proof, we analyse this trace norm for a fixed realisation of $\N_\e$, 
\begin{align}\label{eq:15}
  \|\rho_{\K\E,\bar{\mathfrak a}|\N_\e}-u_\K \rho_{\E,{\bar{\mathfrak a}}|\N_\e}\|_1
  =
  \left\|(\Upsilon_\r -u_\K \tr_\r)(\tr_\e(\rho_{\A\B\E} W_\e)) \right\|_1\ ,
\end{align}
and reintroduce the $\N_\e$ average later. 

We begin by showing that all quantum systems $\A_i$ and $\B_i$ can be assumed to be qubits without loss of generality. 

\subsection{Reduction to qubits}
In the DI framework, the adversary can pick the initial state $\rho_{\A\B\E}$ and all measurements $A_i(a_i|x_i)$ and $B_i(b_i|y_i)$, and, after Alice and Bob perform their measurements, holds the resulting classical-quantum state
\begin{align}\label{def:cq-state}
  \rho_{\E,\a,\b|\x,\y}
  = \tr_{\A\B}\!\left(\! \rho_{\A\B\E} \prod_{i=1}^n A_i(a_i|x_i) B_i(b_i|y_i) \right)\ .
\end{align}
Once $\N_\e$ is fixed, for convenience, we can reorder the Hilbert spaces so that the raw-key rounds are followed by the estimation rounds, allowing us to write
\begin{align}
  &\a =((a_i)_{i \in \mathcal{N}_\r}, (a_i)_{i \in \mathcal{N}_\e}) = (\a_\r, \a_\e)\ , 
  &&\b = ((b_i)_{i \in \mathcal{N}_\r}, (b_i)_{i \in \mathcal{N}_\e}) =(\b_\r, \b_\e)\ , \\ 
  &\x = ((0)_{i \in \mathcal{N}_\r}, (x_i)_{i \in \mathcal{N}_\e}) =(\mathbf 0, \x_\e)\ , 
  &&\y = ((0)_{i \in \mathcal{N}_\r}, (y_i)_{i \in \mathcal{N}_\e}) =(\mathbf 0, \y_\e)\ , \\
 &\z = ((0)_{i \in \mathcal{N}_\r}, (z_i)_{i \in \mathcal{N}_\e})=(\mathbf 0, \z_\e)\ , 
\end{align}
where $\mathbf{0} = (0, \ldots, 0) \in \{0, 1\}^{n_\r}$. Although the vector components have been reordered, the indices $i$, which denote the corresponding rounds, are preserved.

Once $w_\e$ and $G$ are fixed, the state $\rho_{\E,\a,\b|\x,\y}$ fully determines the value of the security parameter
\begin{align}\label{def:f}
  f\!\left[\rho_{\E,\a,\b|\x,\y}\right] &:= 
  \|\rho_{\K\E,\bar{\mathfrak a}|\N_\e}-u_\K \rho_{\E,{\bar{\mathfrak a}}|\N_\e}\|_1  
  \\ \nonumber &=
  \sum_{\k} \left\|\sum_{\a_\r,\b_\r}\!
  \left(\delta_{G\a_\r}^\k -2^{-m} \right)\!\!\! \sum_{\z_\e: |\z_\e|\geq n_\e w_\e}\, \sum_{\a_\e,\b_\e,\x_\e,\y_\e} \!\!\!\!\!\! 4^{-n_\e}\, \delta^{\z_\e +\mathbf 1}_{\a_\e +\b_\e +\x_\e \y_\e}\,
  \rho_{\E,\a,\b|\x,\y} \right\|_1,
\end{align}
where $\delta^{\z_\e +\mathbf 1}_{\a_\e +\b_\e +\x_\e \y_\e} = \prod_{i\in\N_\e} \delta^{z_i+1}_{a_i +b_i +x_i y_i}$.

\begin{lemma}\label{lemma:qubit redux}
  Any classical-quantum state $\rho_{\E,\a,\b|\x,\y}$ of the form \eqref{def:cq-state} with arbitrary Hilbert spaces $\A_i, \B_i, \E$ and binary inputs and outputs $a_i, x_i, b_i, y_i \in\ZO$ can be written as a mixture
  \begin{align}\label{eq:mixture}
    \rho_{\E,\a,\b|\x,\y} = \sum_\mathbf {v,w} p(\mathbf{v,w})\, \rho_{\E,\a,\b|\x,\y,\mathbf{v,w}} \ , 
  \end{align}
  with the following property. For every $\mathbf v=(v_1, \ldots, v_n)$ and $\mathbf w=(w_1, \ldots, w_n)$, the corresponding state
  \begin{align}\label{def:cq-state qubit}
    \rho_{\E,\a,\b|\x,\y,\mathbf{v,w}}
    = \tr_{\A'\B'}\!\left(\! \rho_{\A'\B'\E|\mathbf{v,w}} \prod_{i=1}^n A^{v_i}_i(a_i|x_i) B^{w_i}_i(b_i|y_i) \right)\ ,
  \end{align}
  is obtained by measuring qubits $\A' =\bigotimes_{i=1}^n \mathbb C^2$ and $\B' = \bigotimes_{i=1}^n \mathbb C^2$.
\end{lemma}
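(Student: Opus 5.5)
The plan is to follow the standard Jordan-lemma reduction, applied independently to each round and each party. First, Naimark dilation: every binary-outcome POVM $A_i(a|x)$ on $\A_i$ can be realised as a projective measurement on a larger space $\A_i\otimes\A_i''$ with an ancilla in a fixed pure state. Absorbing the ancillas into $\rho_{\A\B\E}$ leaves the cq-state \eqref{def:cq-state} unchanged. So we may assume, without loss of generality, that all $A_i(a|x)$ and $B_i(b|y)$ are projectors. This is allowed because the adversary chooses both the state and the measurements.

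Second, for each round $i$ we invoke Jordan's lemma on the two projectors $P_0=A_i(0|0)$ and $P_1=A_i(0|1)$. This gives a decomposition $\A_i=\bigoplus_{v_i}\A_i^{v_i}$ into mutually orthogonal subspaces of dimension at most two. Each subspace is invariant under both projectors, so $A_i(a|x)=\bigoplus_{v_i}A_i^{v_i}(a|x)$. One-dimensional blocks can be padded to $\mathbb C^2$ by adding a dummy dimension on which the projector acts trivially. This yields isometries $V_i^{v_i}:\mathbb C^2\to\A_i$, whose ranges are the blocks $\A_i^{v_i}$ (mutually orthogonal for different $v_i$), together with qubit projectors satisfying $A_i(a|x)=\sum_{v_i}V_i^{v_i}A_i^{v_i}(a|x)V_i^{v_i\dagger}$. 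The same construction on Bob's side gives $w_i$ and $B_i^{w_i}(b|y)$.

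Third, let $\Pi_{\mathbf v,\mathbf w}=\prod_i V_i^{v_i}V_i^{v_i\dagger}W_i^{w_i}W_i^{w_i\dagger}$ be the block projector. Since every measurement operator commutes with every $\Pi_{\mathbf v,\mathbf w}$, the product $\prod_iA_i(a_i|x_i)B_i(b_i|y_i)$ is block diagonal. Cyclicity of the partial trace over $\A\B$ then lets us replace $\rho_{\A\B\E}$ by its pinched version $\sum_{\mathbf v,\mathbf w}\Pi_{\mathbf v,\mathbf w}\rho_{\A\B\E}\Pi_{\mathbf v,\mathbf w}$ without changing \eqref{def:cq-state}. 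Concretely, we define
$p(\mathbf v,\mathbf w)=\tr(\Pi_{\mathbf v,\mathbf w}\rho_{\A\B\E})$ and
$\rho_{\A'\B'\E|\mathbf v,\mathbf w}=p(\mathbf v,\mathbf w)^{-1}(V^\dagger\otimes W^\dagger)\rho_{\A\B\E}(V\otimes W)$, with $V=\bigotimes_iV_i^{v_i}$ and $W=\bigotimes_iW_i^{w_i}$; terms with $p=0$ are dropped. Pulling the isometries through the block-diagonal measurement product turns each term into the qubit expression \eqref{def:cq-state qubit}, and summing over $\mathbf v,\mathbf w$ gives \eqref{eq:mixture}. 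The result is a convex mixture of normalised qubit states, exactly as claimed.

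The main obstacle is the bookkeeping in this last step. One must verify that the block structure is a tensor product across rounds and parties, so that $\Pi_{\mathbf v,\mathbf w}$ genuinely commutes with the full product of measurement operators for \emph{all} input strings $(\x,\y)$ simultaneously. This is what makes a single decomposition, independent of $\x,\y$, possible. It holds because Jordan's lemma is applied to both settings of each local party at once, and operators in different rounds or on different parties act on different tensor factors. A secondary point is the one-dimensional blocks and the infinite-dimensional case. Jordan's lemma holds for pairs of projections on any separable Hilbert space (via the Halmos decomposition), so the index sets $v_i$ may be countable, and the sum in \eqref{eq:mixture} still converges in trace norm.
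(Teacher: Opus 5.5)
Your proposal is correct and follows essentially the same route as the paper: Naimark dilation to projective measurements, then Jordan's lemma applied in each round to each party's two projectors, then pinching the state by the product block projectors (which commute with every measurement product for all inputs at once), and finally normalising the block states and embedding them into $(\mathbb{C}^2)^{\otimes n}$ with one-dimensional blocks padded. Two small caveats, neither of which affects the finite-dimensional argument that both you and the paper actually rely on: for a one-dimensional block your $V_i^{v_i}:\mathbb{C}^2\to\mathcal{A}_i$ can only be a partial isometry, so it is cleaner to embed $\mathcal{A}_i^{v_i}\hookrightarrow\mathbb{C}^2$; and your infinite-dimensional aside is not right as stated, because Halmos' two-subspace theorem in general gives a direct integral over the spectrum of the angle operator rather than a countable direct sum of invariant blocks of dimension at most two, so that case would need a continuous mixture instead of a discrete sum over $\mathbf{v},\mathbf{w}$.
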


Using this lemma and applying the triangle inequality to the mixture \eqref{eq:mixture}, we obtain
\begin{align}
  f\!\left[ \rho_{\E,\a,\b|\x,\y} \right]  
  \leq \sum_\mathbf {v,w} p(\mathbf{v,w})\, f\!\left[ \rho_{\E,\a,\b|\x,\y,\mathbf{v,w}} \right]
  \leq \max_\mathbf {v,w} f\!\left[ \rho_{\E,\a,\b|\x,\y,\mathbf{v,w}} \right]\ ,
\end{align}
which implies that, without loss of generality, we can consider all of Alice and Bob's systems to be qubits (i.e., $\A_i \cong\B_i \cong\mathbb C^2$).
This allows us to prove the following.
\begin{lemma}\label{lem: 1}
  For $a,b,x,y \in \{0,1\}$ and any given qubit observables $A(a|x)$ and $B(b|y)$, the operator
  \begin{align}\label{def:H_z lemma}
    H(z)= \sum_{a,b,x,y}\frac 1 4 \,\delta^{z+1}_{a +b +x y}\, A(a|x) B(b|y)
  \end{align}
  has spectral decomposition
  \begin{align}\label{eq:H0 spectral}
    H(z) = \sum_{t,s=0}^1 \lambda_{t,s+z+1} \proj{\phi_{t,s}}\ ,
  \end{align}
  with the sum $s + z + 1$ taken modulo 2, and $\ket{\phi_{t,s}}$ are the eigenvectors associated to the eigenvalues
  \begin{align}\label{eq:lambda0010}
    &\lambda_{0,0} = \theta\ , 
    &\lambda_{1,0} = \frac 1 2 + \frac {\sqrt 2} 4 \sqrt{8\theta(1-\theta)-1}\ , 
    \\ \label{eq:lambda0111}
    &\lambda_{0,1} = 1-\theta\ ,
    &\lambda_{1,1} = \frac 1 2 - \frac {\sqrt 2} 4 \sqrt{8\theta(1-\theta)-1}\ , 
  \end{align}
  for some $\theta \in (w_\c , w_\q]$, satisfying $\lambda_{0,0} \geq\lambda_{1,0} \geq\lambda_{1,1} \geq\lambda_{0,1}$.
\end{lemma}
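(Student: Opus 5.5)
The plan is to reduce $H(z)$ to an affine function of the qubit CHSH operator and then read off its spectrum from the square of that operator. By Lemma~\ref{lemma:qubit redux} (and its standard Jordan-type proof), we may take each binary measurement on a qubit to be projective. So we write $A_x := A(0|x)-A(1|x)$ and $B_y := B(0|y)-B(1|y)$, which are Hermitian with $A_x^2=B_y^2=\unity$. The degenerate cases $A_x=\pm\unity$ are included. Expanding $A(a|x)=\tfrac12(\unity+(-1)^a A_x)$ and similarly for $B(b|y)$, then performing the sum over $a,b$ subject to $\delta^{z+1}_{a+b+xy}$, gives
\begin{align*}
H(z) = \frac12\,\unity + \frac{(-1)^{z+1}}{8}\, C\ , \qquad C := \sum_{x,y}(-1)^{xy} A_x B_y\ .
\end{align*}
So $H(0)$ and $H(1)$ share the eigenvectors of $C$, and their eigenvalues are $\tfrac12 \pm c/8$ with the sign flipped between $z=0$ and $z=1$. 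This sign flip is exactly the index shift $s\mapsto s+z+1$ in \eqref{eq:H0 spectral}.

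Next we use the identity $C^2 = 4\,\unity - [A_0,A_1]\otimes[B_0,B_1]$. For qubits, after a local rotation, $A_0,A_1$ lie in the $X$--$Z$ plane with relative angle $2\alpha$, so $[A_0,A_1] = 2i\sin(2\alpha)\,\sigma_Y$; similarly for Bob with angle $2\beta$. Hence $C^2 = 4\,\unity + 4\sigma\, \sigma_Y\otimes\sigma_Y$ with $\sigma := \sin 2\alpha\sin2\beta \in[-1,1]$. Its eigenvalues are $4(1+\sigma)$ and $4(1-\sigma)$, each with a two-dimensional eigenspace spanned by Bell-type vectors. On each such eigenspace $C$ commutes with $C^2$ and is traceless there; this can be checked directly in the Bell basis, where $C$ is a combination of $Z\!\otimes\! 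Z$, $X\!\otimes\! X$ and cross terms that preserve the $Y\!\otimes\! Y$ eigenspaces. It follows that $C$ has eigenvalues $\pm 2\sqrt{1+\sigma}$ and $\pm 2\sqrt{1-\sigma}$.

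We label by $t\in\{0,1\}$ the $C^2$-eigenspace, with $t=0$ for the larger eigenvalue $4(1+|\sigma|)$; if $\sigma<0$ we relabel. We label by $s$ the sign. We then define $\theta := \tfrac12 + \tfrac{2\sqrt{1+|\sigma|}}{8}$. The $t=0$ pair of eigenvalues of $H$ is then $\theta, 1-\theta$. For $t=1$, we note $4(1-|\sigma|) = 8 - 64(\theta-\tfrac12)^2 = 8\,(8\theta(1-\theta)-1)$, so $\tfrac{2\sqrt{1-|\sigma|}}{8} = \tfrac{\sqrt2}{4}\sqrt{8\theta(1-\theta)-1}$. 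This gives $\lambda_{1,0},\lambda_{1,1}$ as stated. The range $|\sigma|\le1$ gives $\theta\le \tfrac12+\tfrac{\sqrt2}{4}=w_\q$. The ordering $\lambda_{0,0}\ge\lambda_{1,0}\ge\lambda_{1,1}\ge\lambda_{0,1}$ follows from $1+|\sigma|\ge 1-|\sigma|$.

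I expect the main obstacle to be the bookkeeping rather than any deep step. Three points need care. First, justifying projectivity and the planar normal form of the qubit observables, including the degenerate ones. Second, checking that $C$ restricted to each $C^2$-eigenspace really has the two opposite eigenvalues, rather than a repeated one. Third, the boundary case $\sigma=0$, where $\theta=w_\c$ and the spectrum is doubly degenerate. In that case the formulas still hold with $\theta=3/4$ (both pairs become $\tfrac34,\tfrac14$), so I would either allow $\theta\in[w_\c,w_\q]$ or note that this case gives no Bell violation and is irrelevant for the bound. The first thing I would try is to write $A_x,B_y$ explicitly as $\cos\alpha\,\sigma_Z \pm \sin\alpha\,\sigma_X$, compute $C$ in the Bell basis, and diagonalise the resulting $2\times2$ blocks directly.
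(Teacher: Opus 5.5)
Your proof is correct for rank-one projective qubit observables, which is exactly the class the paper's own proof covers. You reach the spectrum by a different route.

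\textbf{The paper's route.} It fixes the normal form $A_0=B_0=\sigma_0$ (Pauli $X$), $A_1=\sigma_\alpha$, $B_1=\sigma_\beta$ in the $X$--$Y$ plane, with $\alpha,\beta\in(0,\pi]$. The CHSH operator is then anti-diagonal in the computational basis. Equivalently, it is block-diagonal on the $\sigma_Z\otimes\sigma_Z$ eigenspaces, with off-diagonal entries $\mu,\nu$. The paper reads off the eigenvalues $\pm|\mu|=\pm2\sqrt{1+\sin\alpha\sin\beta}$ and $\pm|\nu|=\pm2\sqrt{1-\sin\alpha\sin\beta}$, together with explicit Bell-type eigenvectors. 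The relation $S=8H(1)-4\unity=4\unity-8H(0)$ then plays the role of your first step.

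\textbf{Your route.} You use $C^2=4\unity-[A_0,A_1]\otimes[B_0,B_1]$. This gives the spectrum of $C^2$ (and Tsirelson's bound) without writing down any $4\times 4$ matrix. The price is a separate argument that $C$ has opposite eigenvalues within each $C^2$-eigenspace; the paper gets that symmetry for free from its $2\times2$ off-diagonal blocks. Your tracelessness check is sound. In your normal form, $C$ is a combination of $\sigma_P\otimes\sigma_Q$ with $P,Q\in\{X,Z\}$, each of which commutes with $\sigma_Y\otimes\sigma_Y$ and has vanishing trace against $\unity\pm\sigma_Y\otimes\sigma_Y$.

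A bonus of your route is that $t$ labels the $\sigma_Y\otimes\sigma_Y$ sectors. Since Alice's raw-key observable $A_0$ lies in the $X$--$Z$ plane, $A_0\otimes\unity$ anticommutes with $\sigma_Y\otimes\sigma_Y$ and so swaps the two sectors. That is Lemma~\ref{lem: 3}, obtained without computing eigenvectors; for $\sigma=0$ you should fix the $t$-labels to be these sectors.

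\textbf{Degenerate observables.} These are \emph{not} covered by your argument, and cannot be covered by any argument. Take $A_0=B_0=\unity$ and $A_1=B_1=\sigma_Z$. Then $C=\unity+\unity\otimes\sigma_Z+\sigma_Z\otimes\unity-\sigma_Z\otimes\sigma_Z$ has spectrum $\{2,2,2,-2\}$, so $\mathrm{tr}\,H(1)=5/2$. The stated form forces $\mathrm{tr}\,H(1)=2$, since $\lambda_{0,0}+\lambda_{0,1}=\lambda_{1,0}+\lambda_{1,1}=1$. Your tracelessness step, and the lemma itself, fail whenever $\mathrm{tr}\,C=\sum_{x,y}(-1)^{xy}\,\mathrm{tr}(A_x)\,\mathrm{tr}(B_y)\neq 0$.

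So the statement must be read for traceless (rank-one projective) observables. The paper assumes this implicitly through its parametrisation. It can always be arranged in Lemma~\ref{lemma:qubit redux} by padding each one-dimensional Jordan block into $\mathbb C^2$ with rank-one projectors, so nothing downstream is affected.

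\textbf{The boundary case.} Your observation is right and applies to the paper too. For $\alpha=\pi$ or $\beta=\pi$ the paper's own formulas give $\theta=3/4=w_\c$, so the correct range is $\theta\in[w_\c,w_\q]$. Indeed, the values of $\xi_{1,0}$ and $\xi_{0,1}$ used in Lemma~\ref{lem: 2} are attained at $\theta=w_\c$.
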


\subsection{State truncation}
Next, we want to truncate $\rho_{\A\B\E}$ onto a subspace of $\A\B$ that produces the outcome $W_\e$ with high probability, irrespective of $\N_\e$. 
The truncation is defined by the projector
\begin{align}\label{def:T}
  T= \sum_{(\t,\s)\in \mathcal T} \proj{\Phi_{\t,\s}}\ ,   
\end{align}
where $\mathcal T \subseteq \ZO^{2n}$, $\ket{\Phi_{\t,\s}} = \bigotimes_{i=1}^n \ket{\phi_{t_i,s_i}}$ and the states $\ket{\phi_{t_i,s_i}} \in \A_i \B_i$ are the eigenvectors of $H_i(1)$ with the form given in \eqref{eq:H0 spectral}.
For now, we only use that $T$ and $W_\e$ commute, as they are diagonal in the same eigen-basis, and defer the definition of the set $\mathcal T$ until it is used.

Let $\rho^\lo_{\A\B\E} = \rho_{\A\B\E} -T\rho_{\A\B\E}T$. Using the decomposition $\rho_{\A\B\E} =T\rho_{\A\B\E}T +\rho^\lo_{\A\B\E}$ in the fixed $\mathcal{N}_{\e}$ trace norm~\eqref{eq:15} and applying the triangle inequality gives
\begin{equation}
\begin{aligned}
  \left\| (\Upsilon_\r -u_\K \tr_\r)(\tr_\e(\rho_{\A\B\E} W_\e)) \right\|_1
  \leq
  &\left\| (\Upsilon_\r -u_\K \tr_\r)(\tr_\e(T\rho_{\A\B\E}T W_\e)) \right\|_1 \phantom{\ .}
  \\ \label{eq:8}
  &+\left\| \Upsilon_\r (\tr_\e(\rho^\lo_{\A\B\E} W_\e)) \right\|_1 \\ &+\left\| u_\K \tr_\r (\tr_\e(\rho^\lo_{\A\B\E} W_\e)) \right\|_1\ .
\end{aligned}
\end{equation}
Using the contractivity of the trace norm under CPTP maps \cite{nielsen2000quantum}, the cyclicity of the partial trace for operators acting solely on the traced subsystem, and the commutativity of $T$ and $W_\e$, we can bound the second term as
\begin{align}
  \left\| \Upsilon_\r (\tr_\e (\rho^\lo_{\A\B\E} W_\e)) \right\|_1
  &\leq 
  \left\| \tr_\e(\rho^\lo_{\A\B\E} W_\e) \right\|_1
  \\ &=
  \left\| \tr_\e \!\left( W_\e^{1/2}(\rho_{\A\B\E} -T \rho_{\A\B\E}T) W_\e^{1/2} \right) \right\|_1\\
  &\leq \left\| W_\e^{1/2}\rho_{\A\B\E}W_\e^{1/2} -TW_\e^{1/2} \rho_{\A\B\E} W_\e^{1/2}T \right\|_1\ .\label{eq: pre_gentle_measurement_application}
\end{align}

By a variant of the gentle measurement lemma \cite[Section 9.4]{wilde2016quantum}, for any positive semidefinite operator $\rho$ with $\tr\rho\leq 1$ and any operator $0\leq \Lambda\leq \unity$, we have
\begin{align}
    \|\rho-\sqrt{\Lambda}\rho\sqrt{\Lambda}\|_1
    \leq 2\sqrt{\tr\!\left((\unity-\Lambda)\rho\right)} \ .
\end{align} 
Applying the above inequality with $\rho = W_\e^{1/2}\rho_{\A\B\E}W_\e^{1/2}$ and $\Lambda = T$, noting $0 \leq W_\e \leq \unity$ and $\sqrt{T}=T$, we obtain the upper bound
\begin{align}
    \left\| \Upsilon_\r (\tr_\e (\rho^\lo_{\A\B\E} W_\e)) \right\|_1
  &\leq 2 \sqrt{ \tr((\mathds{1} - T) W_{\e}^{1/2} \rho_{\A \B \E} W_\e^{1/2})} \\
  &= 
  2\sqrt{\tr(\rho_{\A\B}(\unity-T)W_\e)}\ ,
  \label{eq:sqrt_bound_2nd_term}
\end{align}
where the final equality again uses that $T$ and $W_\e$ commute.

The third term in \eqref{eq:8} can be analysed in the same way as the second term, and is therefore also bounded by \eqref{eq:sqrt_bound_2nd_term}. Combining the bounds on the second and third terms gives the following upper bound on the fixed $\mathcal{N}_\e$ trace norm in \eqref{eq:15},
\begin{equation}
\begin{aligned}\label{eq:first second}
  \|\rho_{\K\E,\bar{\mathfrak a}|\N_\e}-u_\K \rho_{\E,{\bar{\mathfrak a}}|\N_\e}\|_1 
  \leq 
  &\left\| (\Upsilon_\r -u_\K \tr_\r)(\tr_\e(T\rho_{\A\B\E}T W_\e)) \right\|_1 \\ 
  &+ 4 \sqrt{\tr(\rho_{\A\B} (\unity-T) W_\e)}
\end{aligned}\ .
\end{equation}
The above bound has two contributions: the first corresponds to the security criterion under the constraint that the underlying unknown state used in the protocol has been truncated, while the second comes from the truncation error itself, i.e., that the truncated state is not too distinguishable from the original. The remainder of the proof is devoted to deriving simpler bounds for these two contributions, restoring the averaging over $\mathcal{N}_\e$ and identifying a truncation operator (through the correct selection of the set $\mathcal{T}$) that balances them so that the security condition~\eqref{eq:sec cond} is satisfied. 

\subsection{Bound of the second term}
To obtain a bound on the second term on the right-hand side of \eqref{eq:first second}, we use the structure of $W_\e$ and the spectral decomposition of $H_i(z_i)$ obtained in Lemma~\ref{lem: 1}. 
For any $\gamma\geq 0$, we can write
\begin{align}
  \tr(\rho_{\A\B} (\unity-T) W_\e) 
  &=  
  \sum_{(\t,\s) \notin \mathcal T}\! \bra{\Phi_{\t,\s}} \rho_{\A\B} \ket{\Phi_{\t,\s}} \sum_{\z_\e: |\z_\e| \geq n_\e w_\e} \prod_{i\in \mathcal{N}_\e} \lambda_{t_i, s_i+z_i+1}
  \\ &\leq
  \sum_{(\t,\s) \notin \mathcal T}\! \bra{\Phi_{\t,\s}} \rho_{\A\B} \ket{\Phi_{\t,\s}} \sum_{\z_\e} e^{\gamma(|\z_\e|-n_\e w_\e)} \prod_{i\in \N_\e} \lambda_{t_i, s_i+z_i+1}
  \\ &=
  \sum_{(\t,\s) \notin \mathcal T}\! \bra{\Phi_{\t,\s}} \rho_{\A\B} \ket{\Phi_{\t,\s}}
  \sum_{\z_\e} \prod_{i\in \N_\e} e^{\gamma (z_{i} - w_{\e})} \lambda_{t_{i}, s_{i} + z_{i} + 1}
  \\&=
  \sum_{(\t,\s) \notin \mathcal T}\! \bra{\Phi_{\t,\s}} \rho_{\A\B} \ket{\Phi_{\t,\s}} \prod_{i\in \N_\e} \left[ e^{\gamma(1-w_\e)} \lambda_{t_i,s_i} +e^{-\gamma w_\e} \lambda_{t_i,s_i+1} \right]\ ,
\end{align}
putting back the $\N_\e$ average and using the concavity of $\sqrt{q}$ on $q\geq 0$, it holds that
\begin{align}
     \EE_{\N_\e}\!\sqrt{\tr(\rho_{\A\B} (\unity-T) W_\e)}\leq \sqrt{\EE_{\N_\e}\tr(\rho_{\A\B} (\unity-T) W_\e)}\ .
\end{align}
Since
\begin{align}\nonumber
  \EE_{\N_\e}\! \tr(\rho_{\A\B} (\unity-T) W_\e) 
  &\leq\!\!   
  \sum_{(\t,\s) \notin \mathcal T}\!\! \bra{\Phi_{\t,\s}} \rho_{\A\B} \ket{\Phi_{\t,\s}} \prod_{i=1}^n \left[p_\r +p_\e e^{\gamma(1-w_\e)} \lambda_{t_i,s_i} +p_\e e^{-\gamma w_\e} \lambda_{t_i,s_i+1} \right]
  \\ &\leq   
  \max_{(\t,\s) \notin \mathcal T}  \prod_{i=1}^n \left[p_\r +p_\e e^{\gamma(1-w_\e)} \lambda_{t_i,s_i} +p_\e e^{-\gamma w_\e} \lambda_{t_i,s_i+1} \right]\ ,
\end{align}
and, using the inequality $1+x\leq e^{x}$, we obtain
\begin{align}\label{max exp}
  \EE_{\N_\e}\! \sqrt{\tr(\rho_{\A\B} (\unity-T) W_\e)} 
  &\leq   
    \max_{(\t,\s) \notin \mathcal T} \exp( \frac{p_\e}{2} \sum_{i=1}^n \left[ e^{\gamma(1-w_\e)} \lambda_{t_i,s_i} +e^{-\gamma w_\e} \lambda_{t_i,s_i+1} -1\right]).
\end{align}
Now, recall that in the DI framework, the set of eigenvalues $\lambda_{t_i,s_i}$ and eigenvectors $\ket{\phi_{t_i,s_i}}$ can vary in each round $i\in\{1,\ldots, n\}$.
To upper-bound \eqref{max exp}, we use the parametrisation of the eigenvalues from (\ref{eq:lambda0010}-\ref{eq:lambda0111}) and independently maximise over $\theta$ in each round. 
This maximisation becomes identical in each round and can be written as
\begin{align}
    \label{eq:xi}
  \xi_{t,s} 
  = \max_{\theta \in (w_\c , w_\q]} \left(e^{\gamma (1-w_\e)} \lambda_{t,s}(\theta) +e^{-\gamma w_\e}\lambda_{t,s+1}(\theta) \right)\ ,
\end{align} 
where $\lambda_{t,s}(\theta) = \lambda_{t,s}$ in (\ref{eq:lambda0010}-\ref{eq:lambda0111}). Putting $\xi_{t,s}$ in the upper bound \eqref{max exp} gives
\begin{align} \label{eq:45}
  \EE_{\N_\e}\! \sqrt{\tr(\rho_{\A\B} (\unity-T) W_\e)} 
  \leq 
  \max_{(\t,\s) \notin \mathcal T} \exp[\frac{n p_\e}{2} \left(\sum_{t,s=0}^1 \mu_{t,s}(\t,\s) \xi_{t,s} - 1\right)]\ ,
\end{align}
where $\mu_{t,s}(\t,\s)$ is the relative frequency of the pair $(t,s)\in \ZO^2$ in the vector $(\t,\s)\in \ZO^{2n}$. The following lemma provides a simple expression for the exponent for a suitable choice of $\gamma$.

\begin{lemma}\label{lem: 2}
For $\xi_{t,s}$ defined in \eqref{eq:xi} and any $(\t,\s)\in\{0,1\}^{2n}$, there exists a $\gamma$ such that
\begin{align}
\sum_{t,s=0}^1 \mu_{t,s}(\t,\s)\,\xi_{t,s} - 1
\le
\begin{cases}
-\Delta^2, & \text{if } \Delta \ge 0\\
0, & \text{otherwise}
\end{cases}\ ,
\end{align}
where 
\begin{align}
\Delta
= w_\e - \frac{1}{2}
- \frac{1}{4}\Bigl(\sqrt{2}\,\mu_{0,0}(\t,\s)
- \mu_{0,1}(\t,\s)
+ \mu_{1,0}(\t,\s)\Bigr)
\end{align}
for $w_\e\in (w_\c,w_\q]$. 
\end{lemma}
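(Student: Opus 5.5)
The plan is to compute each $\xi_{t,s}$ in closed form for $\gamma\ge 0$. This turns the left-hand side into the centred moment generating function of a Bernoulli variable, which Hoeffding's lemma then controls. The choice of $\gamma$ will depend on $(\t,\s)$ only through $\Delta$, which the statement allows.

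\textbf{Step 1 (collapse $\xi_{t,s}$).} From \eqref{eq:lambda0010}--\eqref{eq:lambda0111} we have $\lambda_{t,0}+\lambda_{t,1}=1$ for each $t$. Hence the bracket in \eqref{eq:xi} equals
\begin{align}
e^{-\gamma w_\e}\bigl[1+(e^{\gamma}-1)\lambda_{t,s}(\theta)\bigr].
\end{align}
For $\gamma\ge 0$ this is nondecreasing in $\lambda_{t,s}(\theta)$. So $\xi_{t,s}=e^{-\gamma w_\e}[1+(e^\gamma-1)c_{t,s}]$, with $c_{t,s}=\sup_{\theta\in(w_\c,w_\q]}\lambda_{t,s}(\theta)$; the max in \eqref{eq:xi} is read as a supremum on the half-open interval.

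\textbf{Step 2 (compute the suprema).} Since $\theta(1-\theta)$ is decreasing for $\theta>1/2$, and $8\theta(1-\theta)-1$ equals $1/2$ at $\theta=3/4$ and $0$ at $\theta=w_\q$, the suprema are:
\begin{align}
c_{0,0}=w_\q=\tfrac12+\tfrac{\sqrt2}{4},\qquad c_{0,1}=1-w_\c=\tfrac14,\qquad c_{1,0}=\tfrac34,\qquad c_{1,1}=\tfrac12.
\end{align}
The first and last are attained at $\theta=w_\q$; the middle two are approached as $\theta\to w_\c$. Define $p:=\sum_{t,s}\mu_{t,s}c_{t,s}$. Using $\sum_{t,s}\mu_{t,s}=1$, a short computation gives
\begin{align}
p=\tfrac12+\tfrac14\bigl(\sqrt2\,\mu_{0,0}-\mu_{0,1}+\mu_{1,0}\bigr),
\end{align}
so $\Delta=w_\e-p$ exactly. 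Consequently
\begin{align}
\sum_{t,s}\mu_{t,s}\xi_{t,s}-1=e^{-\gamma w_\e}\bigl(1-p+pe^{\gamma}\bigr)-1=\EE\, e^{\gamma(Z-w_\e)}-1,
\end{align}
with $Z\sim\mathrm{Bernoulli}(p)$ and $p\in[1/4,w_\q]\subset[0,1]$.

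\textbf{Step 3 (choose $\gamma$).} If $\Delta<0$, take $\gamma=0$; the expression is then exactly $0$. If $\Delta\ge0$, Hoeffding's lemma for a variable in $[0,1]$ gives
\begin{align}
\EE\,e^{\gamma(Z-p)}\le e^{\gamma^2/8},\quad\text{hence}\quad \EE\,e^{\gamma(Z-w_\e)}\le e^{-\gamma\Delta+\gamma^2/8}.
\end{align}
Choosing $\gamma=4\Delta\ge0$ yields the bound $e^{-2\Delta^2}-1$. The elementary inequality $e^{-x}\le 1-x+x^2/2$ for $x\ge0$ then gives
\begin{align}
e^{-2\Delta^2}-1\le -2\Delta^2+2\Delta^4\le-\Delta^2,
\end{align}
where the last step requires $\Delta^2\le 1/2$. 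This holds because $\Delta\le w_\q-p\le w_\q-\tfrac14<0.61$.

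\textbf{Main obstacle.} The only delicate point is Step 2: one must check that the monotonicity in $\lambda_{t,s}$ (which needs $\gamma\ge0$) allows each $\xi_{t,s}$ to be maximised separately, and that the resulting suprema reproduce exactly the affine combination defining $\Delta$. After that, the argument is a standard Chernoff--Hoeffding calculation.
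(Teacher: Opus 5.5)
Your proof is correct. Its first half matches the paper's. Once the $\xi_{t,s}$ are computed, you arrive at the same expression $\frac14 e^{\gamma(1-w_\e)}(2+\eta)+\frac14 e^{-\gamma w_\e}(2-\eta)-1$, with $\eta=\sqrt2\,\mu_{0,0}-\mu_{0,1}+\mu_{1,0}$, so your $p$ is $(2+\eta)/4$. Your Step~1 also fills a gap: using $\lambda_{t,0}+\lambda_{t,1}=1$ and $\gamma\ge 0$ to reduce each maximisation to maximising $\lambda_{t,s}(\theta)$ justifies the closed forms of $\xi_{t,s}$ that the paper only states. Reading the max as a supremum is also what those closed forms implicitly assume for $\xi_{1,0}$ and $\xi_{0,1}$.

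The routes diverge in the choice of $\gamma$. The paper minimises exactly with $\gamma_0=\ln\frac{w_\e(2-\eta)}{(1-w_\e)(2+\eta)}$. This gives the tight Chernoff value $\bigl(1-\frac{\Delta}{w_\e}\bigr)^{w_\e}\bigl(1+\frac{\Delta}{1-w_\e}\bigr)^{1-w_\e}-1=e^{-D(w_\e\|p)}-1$, where $D$ is the binary relative entropy in nats. The paper then relaxes this with Bernoulli's inequality $(1+u)^y\le 1+yu$ for $0\le y\le 1$, obtaining $(1-\Delta)(1+\Delta)-1=-\Delta^2$.

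You instead apply Hoeffding's lemma with the explicit choice $\gamma=4\Delta$, which gives $e^{-2\Delta^2}-1$. This requires the extra range check $\Delta^2\le 1/2$, and you correctly obtain it from $p\ge 1/4$ and $w_\e\le w_\q$.

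The paper's argument is self-contained and keeps the optimal relative-entropy exponent before relaxing, so it is the better starting point for sharper finite-size constants. Yours is shorter and gives an explicit $\gamma$. Your intermediate bound $e^{-2\Delta^2}-1\approx-2\Delta^2$ is also already stronger than the stated $-\Delta^2$, by roughly a factor of two for small $\Delta$.
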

Using this result, we can select an appropriate set of strings $\mathcal T$ that define the projector $T$ in \eqref{def:T} to bound \eqref{eq:45}. Let
\begin{align}\label{def:cal T}
  \mathcal T_w = \Bigl\{ (\t,\s) : 
    w - \frac{1}{2} - \frac{1}{4} \Bigl(\sqrt{2}\,\mu_{0,0}(\t,\s) - \mu_{0,1}(\t,\s) + \mu_{1,0}(\t,\s)\Bigr) \leq 0
  \Bigr\}\ 
\end{align} 
for any $w\in (w_\c,w_\q]$. From now on, we make the dependence of $T$ on the parameter $w$ (through the set of vectors $\mathcal{T}_w$) explicit, denoting it by $T_w$. Using this notation, we set $w=w_\e - \Delta$ for any threshold winning frequency $w_\e \in (w_\c + \Delta, w_\q]$ and statistical uncertainty buffer $\Delta \in [0, w_{\e} - w_{\c})$, allowing us to write
\begin{align}\label{bound 2 term}
  \EE_{\N_\e} 4\sqrt{\tr\Bigl(\rho_{\A\B} (\unity - T_{w}) W_\e\Bigr)}
  \leq 4\exp\!\bigg[-\frac{n p_\e \Delta^2}{2}\bigg]\ .
\end{align} 

\subsection{Bound of the first term} 
\noindent 
Noting that $\sum_{\mathbf \a_\r} \prod_{i\in \N_\r} A_i(a_i|0)=\unity$, we can write
\begin{align}
  &\left\| (\Upsilon_\r -u_\K \tr_\r)(\tr_\e(T_w \rho_{\A\B\E}T_w W_\e)) \right\|_1 
  \\=& \label{eq: tracenorm_simplification}
  \sum_{\k} \left\|\sum_{\a_\r}
  \left(\delta_{G\a_\r}^\k -2^{-m} \right)
  \tr_{\A\B}\!\left( T_w \rho_{\A\B\E}T_w \left[\prod_{i\in \N_\r} A_i(a_i|0)\right] W_\e \right)
  \right\|_1
  \\ \label{eq:44} = &
  \sum_{\k} \left\|\sum_{\a_\r}
  \left(\delta_{G\a_\r}^\k -2^{-m} \right) 
  \sum_{\mathbf r} 2^{-n_\r} (-1)^{\a_\r\cdot\mathbf r}
  \tr_{\A\B}\!\left( T_w\rho_{\A\B\E}T_w \left[\prod_{i\in \N_\r} C_i^{r_i} \right] W_\e \right)
  \right\|_1 ,
\end{align}
where we have defined the Hermitian operators $C_i = A_i(0|0) -A_i(1|0)$. Recall that there is no loss of generality in assuming $A_i(a_i|0)$ are projectors (see Lemma~\ref{lemma:qubit redux}), so $C_i$ is full rank and satisfies $C_i^0=\unity$.

\begin{lemma}\label{lemma: indicator}
  Let $G$ be any full row-rank $m\times n_\r$ binary matrix and define the indicator function
  \begin{align} \label{def:indicator}
    I_G(\br) =
    \begin{cases}
    1, & \text{if } \br \in \Span G \text{ and } \br\neq 0\ , \\
    0, & \text{otherwise}
    \end{cases}
  \end{align}
  where $\Span G$ is the row space of $G$ (modulo 2). 
  The identity
  \begin{align} \label{eq:lemma-sum}
    2^{-n_\r} \sum_{\a_\r} \left( \delta_{G\a_\r}^{\k} -2^{-m}\right)(-1)^{\a_\r \cdot \br}
    = \pm 2^{-m} I_G(\br)
  \end{align}
  holds for any $\k \in \{0,1\}^m$ and $\br \in \{0,1\}^{n_\r}$.
\end{lemma}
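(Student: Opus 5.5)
The identity is pure Fourier analysis over $\mathbb F_2^{n_\r}$, so I would split the left-hand side of \eqref{eq:lemma-sum} into its two terms and evaluate each character sum directly. The second term is
\begin{align*}
-2^{-m}\,2^{-n_\r}\sum_{\a_\r}(-1)^{\a_\r\cdot\br} = -2^{-m}\,\delta^{\br}_{\mathbf 0}\ ,
\end{align*}
by orthogonality of characters. The first term is $2^{-n_\r}\sum_{\a_\r:\,G\a_\r=\k}(-1)^{\a_\r\cdot\br}$. Since $G$ has full row rank $m$, the solution set of $G\a_\r=\k$ is nonempty for every $\k$. It is an affine subspace $\a_0+\ker G$ of size $2^{n_\r-m}$.

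I would then evaluate the character sum over this coset:
\begin{align*}
\sum_{\a\in\a_0+\ker G}(-1)^{\a\cdot\br} = (-1)^{\a_0\cdot\br}\sum_{\mathbf c\in\ker G}(-1)^{\mathbf c\cdot\br}\ .
\end{align*}
The inner sum equals $|\ker G|=2^{n_\r-m}$ if $\br\in(\ker G)^\perp$ and vanishes otherwise. The key linear-algebra fact is $(\ker G)^\perp=\Span G$ over $\mathbb F_2$. One inclusion is immediate, since each row of $G$ is orthogonal to $\ker G$. Equality then follows by dimension counting, using $\dim(\ker G)^\perp = n_\r-\dim\ker G = m$, which holds for the standard bilinear form over a finite field even though it is not positive definite.

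Combining the two terms, the first term is therefore $2^{-m}(-1)^{\a_0\cdot\br}$ when $\br\in\Span G$ and $0$ otherwise. I would split into three cases.
\begin{itemize}
\item If $\br=\mathbf 0$, both terms equal $2^{-m}$ and cancel, giving $0=\pm 2^{-m}I_G(\mathbf 0)$.
\item If $\br\in\Span G\setminus\{\mathbf 0\}$, the second term vanishes and the result is $\pm2^{-m}$, matching $I_G(\br)=1$.
\item If $\br\notin\Span G$, both terms vanish.
\end{itemize}
This establishes \eqref{eq:lemma-sum}. It is worth noting that the sign is explicit: writing $\br=G^T\mathbf u$, we get $\a_0\cdot\br=\k\cdot\mathbf u$, so the sign is $(-1)^{\k\cdot\mathbf u}$ and depends on $\k$ and $\br$ but not on the choice of $\a_0$.

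I expect no real obstacle. The only step needing care is the identity $(\ker G)^\perp=\Span G$ over $\mathbb F_2$, which I would justify by the rank–nullity dimension count rather than by any orthogonal-complement argument valid only over $\mathbb R$.
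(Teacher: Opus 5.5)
Your proof is correct and follows the same route as the paper's: write the solution set of $G\a_\r=\k$ as a coset of $\ker G$, factor out the sign, evaluate the character sum over $\ker G$, identify $(\ker G)^\perp$ with $\Span G$, and note that the $2^{-m}$ term survives only at $\br=\mathbf 0$. You are slightly more careful than the paper, since you justify $(\ker G)^\perp=\Span G$ by the rank--nullity dimension count and check the $\br=\mathbf 0$ case explicitly; the paper leaves both implicit.
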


Using the identity \eqref{eq:lemma-sum} in \eqref{eq:44} yields the upper bound
\begin{align}  
    &\left\| (\Upsilon_\r -u_\K \tr_\r)(\tr_\e(T_w \rho_{\A\B\E}T_w W_\e)) \right\|_1 \\
  \leq &\sum_{\br} I_G(\br) \left\|
  \tr_{\A\B}\! \left( \rho _{\A\B\E}\, T_w\! \left[\prod_{i\in \N_\r} C_i^{r_i} \right]\! T_w\, W_\e \right)\right\|_1 ,\label{eq:omega with E}
\end{align} 
where we have used cyclicity of the partial trace for operators acting only on $\A\B$, together with the commutativity of $T_w$ and $W_\e$, to rearrange the operator inside the trace.

Next, we show that the structure of $T_w$ imposes that $T_w [\prod_{i\in \N_\r} C_i^{r_i} ] T_w=0$ when $|\br|$ is sufficiently large. 

\begin{lemma}\label{lem: 3}
 For $a,b,x,y \in \{0,1\}$ and any given qubit observables $A(a|x)$ and $B(b|y)$, the eigenvectors $\ket{\phi_{t,s}}$ defined in \eqref{eq:H0 spectral} satisfy
  \begin{align}
    \left| \bra{\phi_{t,s}} C \ket{\phi_{t',s'}} \right| \leq \delta_{t+1}^{t'}\ ,
  \end{align}
  for all $t,t',s,s'\in \{0, 1\}$.
\end{lemma}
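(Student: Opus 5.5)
The plan is to re-derive the eigenbasis of Lemma~\ref{lem: 1} in the standard Jordan form of the two qubit observables and identify the label $t$ with an eigenspace of an operator that anticommutes with $C$. By Lemma~\ref{lemma:qubit redux} we may take $A(a|x)$ and $B(b|y)$ to be rank-one projectors on $\mathbb C^2$; if some $A(\cdot|x)$ or $B(\cdot|y)$ is trivial, $\theta$ cannot exceed $w_\c$. We therefore define the $\pm1$-valued observables $A_x = A(0|x)-A(1|x)$ and $B_y=B(0|y)-B(1|y)$, so that $C=A_0\otimes\unity$. After local unitaries, which preserve all overlaps in the statement, we write $A_0=\sigma_z$, $A_1=\cos\alpha\,\sigma_z+\sin\alpha\,\sigma_x$, and similarly $B_0$, $B_1$ with angle $\beta$.

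First step: express $H(1)=\tfrac12\unity+\tfrac18 S$, with $S=A_0B_0+A_0B_1+A_1B_0-A_1B_1$ the CHSH operator. Then compute
\begin{align}
S^2 = 4\,\unity - [A_0,A_1]\otimes[B_0,B_1] = 4\,\unity + 4\sin\alpha\sin\beta\;\sigma_y\otimes\sigma_y .
\end{align}
The operator $\sigma_y\otimes\sigma_y$ has two two-dimensional eigenspaces, with eigenvalues $\pm1$. On these $S^2$ takes the values $4(1\pm\sin\alpha\sin\beta)$, and $S$ commutes with $S^2$. Hence each eigenspace of $\sigma_y\otimes\sigma_y$ carries a pair of $S$-eigenvalues $\pm\sqrt{4(1\pm\sin\alpha\sin\beta)}$, i.e.\ $H(1)$-eigenvalues symmetric about $1/2$. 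Comparing with \eqref{eq:lambda0010}--\eqref{eq:lambda0111}, the pair $\{\theta,1-\theta\}$ is the $t=0$ sector and the pair $\tfrac12\pm\tfrac{\sqrt2}{4}\sqrt{8\theta(1-\theta)-1}$ is the $t=1$ sector. One checks that the squared deviations from $1/2$ sum correctly: $(8\theta-4)^2+(\text{other})^2=8$, matching $4(1+\sin\alpha\sin\beta)+4(1-\sin\alpha\sin\beta)=8$. So $t$ labels the eigenspaces $V_0$, $V_1$ of $\sigma_y\otimes\sigma_y$. This is presumably how Lemma~\ref{lem: 1} was proved, and I would follow its conventions.

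Second step: since $\sigma_z\otimes\unity$ anticommutes with $\sigma_y\otimes\sigma_y$, $C$ maps $V_0$ into $V_1$ and vice versa. Therefore $\bra{\phi_{t,s}}C\ket{\phi_{t,s'}}=0$ for all $s,s'$, which gives the statement when $t'=t$. When $t'=t+1$, we only need $|\bra{\phi_{t,s}}C\ket{\phi_{t',s'}}|\le\|C\|_\infty\le1$, which holds because $C$ has eigenvalues $\pm1$ (or lies in $[-1,1]$ for a general POVM).

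The main obstacle is to make sure the labelling is unambiguous. The two sectors must have distinct $S^2$ eigenvalues, so that the eigenbasis $\ket{\phi_{t,s}}$ used in $T$ is forced to lie inside $V_0$ or $V_1$ rather than mixing them. This requires $\sin\alpha\sin\beta\neq0$. If $\sin\alpha\sin\beta=0$, the observables commute on one side, the maximal $H(1)$ eigenvalue is at most $3/4$, and $\theta\le w_\c$, contradicting $\theta\in(w_\c,w_\q]$. Hence $V_0$ and $V_1$ are genuine distinct eigenspaces of $S^2$. Any further degeneracy inside a sector, for example $\theta=1/2$, does not arise, because $\theta>3/4$.
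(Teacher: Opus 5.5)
Your proof is correct and is essentially the paper's argument in basis-free form. The paper puts the measurement directions in the $x$--$y$ plane, so that $\ket{\phi_{0,s}}\in\mathrm{span}\{\ket{00},\ket{11}\}$ and $\ket{\phi_{1,s}}\in\mathrm{span}\{\ket{01},\ket{10}\}$; these are the eigenspaces of $\sigma_z\otimes\sigma_z$, which play the role of your $\sigma_y\otimes\sigma_y$ sectors. It checks by direct computation that $\bra{\phi_{t,s}}C\ket{\phi_{t,s'}}=0$, i.e.\ that $C=\sigma_0$ swaps the two spans, and it bounds the $t'\neq t$ entries by the operator norm, exactly as you do.

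Your extra argument that the eigenbasis cannot mix the sectors is a worthwhile addition, which the paper sidesteps by writing the eigenvectors explicitly. One remark is inaccurate, though harmlessly so: degeneracy inside the $t=1$ sector does occur at $\theta=w_\q$, where $\lambda_{1,0}=\lambda_{1,1}=\tfrac12$. This does not affect the proof, because that eigenspace is exactly $V_1$.
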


This lemma implies that any vector $\br$ satisfying 
\begin{align}\label{phi C^N phi}
  \bra{\Phi_{\t,\s}} \prod_{i\in \N_\r} C_i^{r_i} \ket{\Phi_{\t',\s'}} \neq 0\ ,    
\end{align}
for some $(\t,\s), (\t',\s') \in \mathcal T_{w}$ requires $|\br| =|\t+\t'| \leq |\t|+|\t'|$. Then, by the definition of $\mathcal T_{w}$, \begin{align}
  &4w-2 \leq \sqrt 2 \mu_{0,0}(\t,\s) -\mu_{0,1}(\t,\s) +\mu_{1,0}(\t,\s)\ ,
  \\  
  &\frac {|\t|} {n} 
  = \mu_{1,0}(\t,\s) +\mu_{1,1}(\t,\s)\ ,
\end{align}
with analogous expressions for $(\t',\s')$. This allows us to apply the following lemma.

\begin{lemma}\label{lem: 4}
If the relative frequencies $\mu_{t,s}(\mathbf{t},\mathbf{s})$ associated with $(\mathbf{t},\mathbf{s}) \in \{0,1\}^{2n}$ satisfy
\begin{align}
4w - 2 \leq
\sqrt{2}\,\mu_{0,0}(\mathbf{t},\mathbf{s})
- \mu_{0,1}(\mathbf{t},\mathbf{s})
+ \mu_{1,0}(\mathbf{t},\mathbf{s})\ ,
\end{align}
for some $w \in (w_\c, w_\q]$, then
\begin{align}
\mu_{1,0}(\mathbf{t},\mathbf{s}) + \mu_{1,1}(\mathbf{t},\mathbf{s})
\leq \tau(w)
:= \frac{\sqrt{2} + 2 - 4w}{\sqrt{2} - 1}\ .
\end{align}
\end{lemma}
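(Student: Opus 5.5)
The statement is a linear-programming bound on the relative frequencies, so the plan is to bound the right-hand side of the hypothesis by a function of the single quantity $x := \mu_{1,0}(\mathbf t,\mathbf s)+\mu_{1,1}(\mathbf t,\mathbf s)$ and then solve for $x$. The only structural facts needed are that the four frequencies $\mu_{t,s}$ are nonnegative and sum to $1$, since they are relative frequencies of the pairs $(t_i,s_i)$ in $(\mathbf t,\mathbf s)$. In particular $\mu_{0,0}+\mu_{0,1} = 1-x$.

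The key step is the estimate
\begin{align}
\sqrt{2}\,\mu_{0,0} - \mu_{0,1} + \mu_{1,0}
\leq \sqrt{2}\,(\mu_{0,0}+\mu_{0,1}) + (\mu_{1,0}+\mu_{1,1})
= \sqrt{2}\,(1-x) + x\ .
\end{align}
This holds because $-\mu_{0,1}\leq \sqrt 2\,\mu_{0,1}$ and $0\leq \mu_{1,1}$, which both follow from nonnegativity. Combining it with the hypothesis gives
\begin{align}
4w-2 \leq \sqrt 2 - (\sqrt 2 -1)\,x\ .
\end{align}
Since $\sqrt 2 - 1>0$, rearranging yields $x \leq (\sqrt 2 + 2 - 4w)/(\sqrt 2 -1) = \tau(w)$, which is the claim.

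As a sanity check, I would note that $\tau(w)\geq 0$ for $w\leq w_\q=(2+\sqrt2)/4$, so the bound is not vacuous in the stated range. The bound is also tight: it is attained by putting mass $1-x$ on $(0,0)$ and mass $x$ on $(1,0)$. No step here is a real obstacle. The only point requiring care is the direction of the inequalities, namely that the coefficients of $\mu_{0,1}$ and $\mu_{1,1}$ on the left-hand side are dominated by $\sqrt2$ and $1$ respectively.
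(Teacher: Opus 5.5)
Your proof is correct and is essentially the paper's argument. The paper rewrites the left-hand side in terms of $|\mathbf{t}|$, $|\mathbf{s}|$ and $\mathbf{t}\cdot\mathbf{s}$, then drops the nonnegative term $(\sqrt2+1)|\mathbf{s}|-\sqrt2\,\mathbf{t}\cdot\mathbf{s} = n\bigl[(\sqrt2+1)\mu_{0,1}+\mu_{1,1}\bigr]$, which is exactly the slack you discard when working directly with the frequencies.
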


Putting this all together, we have $|\t| = n(\mu_{1,0}(\mathbf{t},\mathbf{s}) + \mu_{1,1}(\mathbf{t},\mathbf{s}))$ and so $|\t| \leq n\tau(w)$ for all $(\t,\s) \in \mathcal T_{w}$. Together with the implication of Lemma~\ref{lem: 3}, we have $|\br| = |\t+\t'| \leq |\t| + |\t'| \leq 2n\tau(w)$, which allows us to replace the sum $\sum_{\br}$ in \eqref{eq:omega with E} by $\sum_{\br: |\br|\leq 2 n \tau(w)}$, i.e.,
\begin{align}  
    &\left\| (\Upsilon_\r -u_\K \tr_\r)(\tr_\e(T_w \rho_{\A\B\E}T_w W_\e)) \right\|_1 \\
  \leq &\sum_{k=1}^{\lfloor 2 n \tau(w)\rfloor} \sum_\br I_G(\br) \delta_{|\br|}^k \left\|
  \tr_{\A\B}\! \left( \rho _{\A\B\E}\, T_w\! \left[\prod_{i\in \N_\r} C_i^{r_i} \right]\! T_w\, W_\e \right)\right\|_1 .\label{eq: truncated_sum_over_r}
\end{align}

\begin{lemma}\label{lem: 5}
  For any pair of positive integers $(m, n_\r)$ with $m< n_\r$ and any $k \in \{1, \ldots, n_{\r}\}$, there exists a full row-rank $m\times n_\r$ matrix $G$ such that
  \begin{align}\label{eq: weight_vanish_condition2}
    \sum_{\br} I_G(\br)\,\delta_{|\br|}^{k} = 0
\end{align} 
whenever
\begin{align}\label{eq: weight_vanish_condition}
    m < n_\r - \log_2 \binom{n_\r}{k} - \log_2(2n_\r)\ ,
\end{align}
where the function $I_G(\br)$ is defined in \eqref{def:indicator}. 
\end{lemma}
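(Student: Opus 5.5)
Lemma~\ref{lem: 5} is a Gilbert--Varshamov-type statement, and I would prove it by a probabilistic (counting) argument over uniformly random binary matrices. Fix $m<n_\r$ and $k\in\{1,\ldots,n_\r\}$ with \eqref{eq: weight_vanish_condition} holding. Draw $G$ uniformly from all $m\times n_\r$ binary matrices. We need to exhibit one $G$ that is full row rank and whose row space has no nonzero vector of Hamming weight exactly $k$. The condition \eqref{eq: weight_vanish_condition2} says exactly that no $\br$ with $|\br|=k$ lies in $\Span G$. Since $k\geq 1$, such $\br$ are automatically nonzero, so the $\br\neq 0$ clause of $I_G$ plays no role.

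\textbf{Step 1: bound the bad event for the row space.} Every vector of $\Span G$ has the form $\mathbf c^{T} G$ for some $\mathbf c\in\ZO^m$, and a nonzero vector in the row space requires $\mathbf c\neq 0$. For each fixed $\mathbf c\neq 0$, the vector $\mathbf c^TG$ is uniformly distributed on $\ZO^{n_\r}$, because its coordinates are independent uniform bits: each is a nonzero linear combination of independent uniform entries of a column. Hence
\begin{align}
\Pr[\mathbf c^TG=\br]=2^{-n_\r}
\end{align}
for each fixed $\br$. A union bound over the $2^m-1$ nonzero $\mathbf c$ and the $\binom{n_\r}{k}$ vectors of weight $k$ gives
\begin{align}
\Pr\Bigl[\exists\,\br\in\Span G,\ |\br|=k\Bigr]\le (2^m-1)\binom{n_\r}{k}2^{-n_\r}< 2^{m-n_\r}\binom{n_\r}{k}.
\end{align}
Condition \eqref{eq: weight_vanish_condition} implies $2^{m}\binom{n_\r}{k}2^{-n_\r}<1/(2n_\r)\le 1/2$.

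\textbf{Step 2: bound the rank-deficiency event.} $G$ fails to be full row rank exactly when $\mathbf c^TG=0$ for some $\mathbf c\neq 0$. The same union bound gives
\begin{align}
\Pr[\mathrm{rank}\, G<m]\le (2^m-1)2^{-n_\r}<2^{m-n_\r}.
\end{align}
Since $\binom{n_\r}{k}\ge 1$, this is also less than $1/(2n_\r)\le 1/2$. Adding the two failure probabilities gives a total below $1/n_\r\le 1$. Therefore some $G$ avoids both events, and that $G$ is full row rank and satisfies \eqref{eq: weight_vanish_condition2}.

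\textbf{Extension to all weights simultaneously.} The extra $\log_2(2n_\r)$ margin in \eqref{eq: weight_vanish_condition} is exactly what is needed to strengthen the lemma so that a single $G$ works for every $k$ satisfying the condition at once. This stronger form is what \eqref{extract condit} and the use in \eqref{eq: truncated_sum_over_r} actually require. To get it, union-bound over all $k\le n_\r$ obeying \eqref{eq: weight_vanish_condition}: each contributes less than $1/(2n_\r)$, so their total is less than $1/2$. The rank failure is at most $2^{m-n_\r}$, which is at most $1/2$ since $m<n_\r$.

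The sum of these two is strictly below $1$ provided one checks the boundary case carefully. If no $k$ satisfies the condition, there is nothing to prove. If some $k$ does, then $2^{m-n_\r}<1/(2n_\r)$, so the rank term is also below $1/(2n_\r)$. Then the total failure probability is less than $(n_\r+1)/(2n_\r)\le 1$ for $n_\r\ge1$.

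The main obstacle is this bookkeeping of constants in the simultaneous version. Everything else is routine once the uniformity of $\mathbf c^TG$ is noted. I would state the argument in the simultaneous form, since it contains the single-$k$ statement as a special case.
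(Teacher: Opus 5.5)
Your proof is correct and is essentially the paper's first-moment (Gilbert--Varshamov) argument. The one difference is the sampling model: the paper draws a uniformly random $m$-dimensional subspace, so full row rank is automatic and $\Pr[\br\in\Span G]=(2^m-1)/(2^{n_\r}-1)$ by symmetry, whereas you draw a uniform matrix, which makes the uniformity of $\mathbf c^T G$ elementary but costs a separate union bound for rank deficiency. Your remark on simultaneity is also well taken: the main text needs one $G$ with no nonzero codeword of any weight up to $\lfloor 2n\tau(w)\rfloor$, which the per-$k$ statement does not literally give, but which follows in either sampling model by summing the per-$k$ bound $<1/(2n_\r)$ over at most $n_\r$ weights, as you do.
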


We now choose $G$ so that no vectors in its row space have weight less than or equal to $\lfloor 2n\tau(w)\rfloor$. Lemma~\ref{lem: 5} shows that such a choice of $G$ is possible whenever $m$ lies below a threshold function, which we lower bound with
\begin{align}
    M(n_\r) = \begin{cases}
        n_\r
    -
    \log_2
    \binom{n_\r}{ \lfloor \min\{2n\tau(w),n_\r/2\}\rfloor}
    -
    \log_2(2n_\r), & n_\r \geq 1\\
    0, & \text{otherwise} 
    \end{cases}\ ,
    \label{eq: M_definition}
\end{align}
noting that
\begin{align}
    \binom{n_\r}{\lfloor\min\{2n\tau(w), n_\r/2\}\rfloor}
\end{align}
is the largest binomial coefficient
$\binom{n_\r}{k}$ for
$1 \leq k \leq \min\{\lfloor 2n\tau(w)\rfloor,n_\r\}$. We also note that, although Lemma~\ref{lem: 5} assumes that $m$ is a positive integer, we may allow any integer $m$, since the protocol produces no output when $m\leq 0$.

Returning to the trace norm, by Lemma~\ref{lem: 5}, 
$m<M(n_\r)$ gives
\begin{align}
    \sum_{\br} I_G(\br)\,\delta_{|\br|}^{k}=0,
    \qquad
    \forall\, k\in\{1,\ldots,\min\{2n\tau(w),n_\r\}\} \ , 
\end{align}
in which case
\begin{align}\label{eq: vanishing_tracenorm_observation}
    \left\|
    (\Upsilon_\r-u_\K\tr_\r)
    \bigl(
    \tr_\e(T_w\rho_{\A\B\E}T_w W_\e)
    \bigr)
    \right\|_1
    =0\ ,
\end{align}
and for arbitrary $m$ (i.e., including when $m \geq M(n_\r)$), the trace norm satisfies
\begin{align}\label{eq: general_tracenorm_bound}
    \left\|
    (\Upsilon_\r-u_\K\tr_\r)
    \bigl(
    \tr_\e(T_w\rho_{\A\B\E}T_w W_\e)
    \bigr)
    \right\|_1
    \le 2\ .
\end{align}

Now we put back the average over $\mathcal N_\e$ (equivalently $\mathcal N_\r$) in the trace norm and use these bounds. Let $N_\r\sim\mathrm{Bin}(n,p_\r)$ be the random variable corresponding to $n_\r$, so that
\begin{align}
    \Pr[N_\r=n_\r]
    =
    \binom{n}{n_\r}p_\r^{n_\r}p_\e^{n-n_\r}\ .
\end{align}
Then, we can write $\mathcal N_\e$-averaged trace norm as
\begin{align}
    &\EE_{\mathcal N_\e} \left\|
    (\Upsilon_\r-u_\K\tr_\r)
    \bigl(
    \tr_\e(T_w\rho_{\A\B\E}T_w W_\e)
    \bigr)
    \right\|_1 \\
    &=
    \sum_{n_\r=0}^n
    \Pr[N_\r=n_\r]\,
    \EE\!\left[
        \left\|
        (\Upsilon_\r-u_\K\tr_\r)
        \bigl(
        \tr_\e(T_w\rho_{\A\B\E}T_w W_\e)
        \bigr)
        \right\|_1
        \,\middle|\,
        N_\r=n_\r
    \right]
    \\
    &=
    \sum_{\substack{0\le n_\r\le n:\\ m<M(n_\r)}}
    \Pr[N_\r=n_\r]\,
    \EE\!\left[
        \left\|
        (\Upsilon_\r-u_\K\tr_\r)
        \bigl(
        \tr_\e(T_w\rho_{\A\B\E}T_w W_\e)
        \bigr)
        \right\|_1
        \,\middle|\,
        N_\r=n_\r
    \right]
    \nonumber\\
    &\quad+
    \sum_{\substack{0\le n_\r\le n:\\ m\ge M(n_\r)}}
    \Pr[N_\r=n_\r]\,
    \EE\!\left[
        \left\|
        (\Upsilon_\r-u_\K\tr_\r)
        \bigl(
        \tr_\e(T_w\rho_{\A\B\E}T_w W_\e)
        \bigr)
        \right\|_1
        \,\middle|\,
        N_\r=n_\r
    \right]
    \\
    &\le
    \sum_{\substack{0\le n_\r\le n:\\ m<M(n_\r)}}
    \Pr[N_\r=n_\r]\times 0
    +
    \sum_{\substack{0\le n_\r\le n:\\ m\ge M(n_\r)}}
    \Pr[N_\r=n_\r]\times 2
    \label{eq: zero_and_2}
    \\
    &=
    2
    \sum_{\substack{0\le n_\r\le n:\\ m\ge M(n_\r)}}
    \Pr[N_\r=n_\r]\ ,
    \label{eq: term1 upperbound}
\end{align}
where the expectation without subscript is over all sets $\mathcal{N}_\e$ of size $n_\e=n-n_\r$, conditioned on $n_\r$, and we have applied the bounds in \eqref{eq: vanishing_tracenorm_observation} and \eqref{eq: general_tracenorm_bound}.

Finally, we rewrite the probability \eqref{eq: term1 upperbound}. Since the value of $n_\r$ resulting from the protocol is not known in advance, we first define the output length $m$ in terms of a threshold $n_\r^*$, to be defined below. For given $n$, $w$, and $n_\r^*\in\{0,\ldots,n\}$, let
\begin{align}\label{eq: m_definition}
    m =\begin{cases}
        \lfloor M(n_\r^*)\rfloor-1, &\text{ if }2\lfloor 2n\tau(w)\rfloor+2\leq n_\r^* \leq n\\
     0, &\text{ otherwise}\ 
    \end{cases} \ . 
\end{align}
This choice ensures that $m<M(n_\r^*)$ whenever $2\lfloor 2n\tau(w)\rfloor+2\leq n_\r^* \leq n$, meaning we can use the following lemma. 
\begin{lemma}\label{lem: new_8}
    For any $n_\r^* \in \{0, \ldots, n\}$ such that $2\lfloor 2n\tau(w)\rfloor+2\leq n_\r^* \leq n$ and $m < M(n_\r^*)$, if $m \geq M(n_\r)$, then $n_\r\leq n_\r^*$.
\end{lemma}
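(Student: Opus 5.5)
The plan is to prove the contrapositive: if $n_\r > n_\r^*$, then $M(n_\r) \geq M(n_\r^*) > m$, which contradicts $m \geq M(n_\r)$. Everything therefore reduces to showing that $M$ is nondecreasing on the integers $n_\r \geq n_\r^*$. This is where the hypothesis $n_\r^* \geq 2\lfloor 2n\tau(w)\rfloor + 2$ is used.

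\emph{Step 1: simplify $M$ on the relevant range.} Write $K = \lfloor 2n\tau(w)\rfloor$. For every integer $n_\r \geq 2K+2$ we have $n_\r/2 \geq K+1 > 2n\tau(w)$. Hence $\min\{2n\tau(w), n_\r/2\} = 2n\tau(w)$, and its floor is $K$. Also $n_\r \geq 2 \geq 1$, so the first branch of \eqref{eq: M_definition} applies. Thus, on this range,
\begin{align}
M(n_\r) = n_\r - \log_2\binom{n_\r}{K} - \log_2(2n_\r)\ .
\end{align}
Since $n_\r^* \geq 2K+2$, this closed form holds for every $n_\r \geq n_\r^*$.

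\emph{Step 2: one-step increment.} Using $\binom{n_\r+1}{K} = \binom{n_\r}{K}\frac{n_\r+1}{n_\r+1-K}$, we get
\begin{align}
M(n_\r+1) - M(n_\r) = 1 - \log_2\frac{(n_\r+1)^2}{(n_\r+1-K)\,n_\r}\ .
\end{align}
It suffices to show that $(n_\r+1)^2 < 2(n_\r+1-K)n_\r$. From $K \leq (n_\r-2)/2$ we get $n_\r+1-K \geq (n_\r+4)/2$, so the right-hand side is at least $(n_\r+4)n_\r = n_\r^2+4n_\r$. This exceeds $n_\r^2+2n_\r+1$ because $n_\r \geq 1$. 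The increment is therefore strictly positive. The case $K=0$ is covered by the same computation, since then the binomial coefficient equals $1$.

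\emph{Step 3: conclude.} Iterating Step 2 from $n_\r^*$ up to any $n_\r > n_\r^*$ gives $M(n_\r) > M(n_\r^*) > m$. This contradicts $m \geq M(n_\r)$, so $n_\r \leq n_\r^*$.

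The only delicate point is Step 1. We must check that the floor of the minimum is exactly $K$ for all $n_\r \geq n_\r^*$, and not just at $n_\r^*$, so that the binomial argument stays fixed while $n_\r$ grows. The threshold $2K+2$ is exactly what guarantees both this and the inequality $n_\r+1-K \geq (n_\r+4)/2$ used in Step 2.
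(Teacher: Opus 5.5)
Your proof is correct and follows the paper's argument. You show that $M$ is monotone on $n_\r \geq 2\lfloor 2n\tau(w)\rfloor+2$ using the closed form with the binomial index fixed at $\lfloor 2n\tau(w)\rfloor$, compute the one-step increment from the binomial ratio, and conclude by contradiction. The only difference is that the paper checks the final inequality by solving the quadratic $n_\r^2-2\lfloor 2n\tau(w)\rfloor n_\r-1\geq 0$, whereas you bound $\lfloor 2n\tau(w)\rfloor\leq (n_\r-2)/2$ directly, which also gives strict increase.
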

Therefore, if $n_\r^*$ satisfies $2\lfloor 2n\tau(w)\rfloor+2\leq n_\r^* \leq n$, Lemma~\ref{lem: new_8} implies that $m\geq M(n_\r)$ only when $n_\r\leq n_\r^*$. This means that either the output length is 0, i.e., $m = 0$, or we can replace the summation condition in
\eqref{eq: term1 upperbound} to obtain the upper bound 
\begin{align}
    2\sum_{\substack{0\le n_\r\le n:\\ m\ge M(n_\r)}}
    \Pr[N_\r=n_\r]\
    \leq\ 2
    \sum_{\substack{0\le n_\r\le n:\\ n_\r\leq n_\r^*}}
    \Pr[N_\r=n_\r]\ =\ 2\Pr[N_\r\le n^*_\r]\ . \label{eq: term1 upperbound sum up to n_r*}
\end{align}

\subsection{Putting everything together}
To ensure the security condition \eqref{eq:sec cond}, we impose that, after reintroducing the $\mathcal{N}_\e$ average, each of the two terms in the upper bound on the trace norm in \eqref{eq:first second} is bounded by $\epsilon/2$. These two terms have been manipulated to obtain a bound on the second term, \eqref{bound 2 term}, and a bound on the first term, \eqref{eq: term1 upperbound sum up to n_r*}. To determine parameter choices that achieve this, we begin by fixing the value of the second term in the bound \eqref{bound 2 term} to
\begin{align}
  4\exp \bigg[-\frac{n p_\e \Delta^2}{2}\bigg] = \frac \epsilon 2\ .
\end{align}
This entails a trade-off between $p_\e$ and $\Delta$, for which we choose
\begin{align}
  &p_\e = n^{-1/3}\ , \qquad
  \Delta = n^{-1/3}\sqrt{2\ln(8 /\epsilon)}\ .
\end{align}
For the first term, we set $n_\r^*=\lfloor np_\r-\delta\rfloor$ and
\begin{align}\label{eq: delta_choice}
    \delta
    :=
    \sqrt{\frac n2\ln\frac4\epsilon} > 0\ .
\end{align}
With these parameter choices, we lower bound the value of $m$ defined in \eqref{eq: m_definition} to obtain the final key length,
\begin{align}\label{eq:finite_m_entropy_bound}
    \Bigg\lfloor np_\r
    \left[
        1-
        \tilde h\!\left(
            \frac{2\tau(w)}{p_\r}
        \right)
    \right]
    -
    \sqrt{\frac n2\ln\frac4\epsilon}
    -
    \log_2\!\left(np_\r\right)-4 \Bigg\rfloor \ ,
\end{align}
where $\tilde h(\cdot)$ is the modified
binary entropy defined by
\begin{align}
    \tilde h(q)
    :=
    \begin{cases}
        h(q),
        &
        0
    \le
    q< \frac{1}{np_\r}\Bigl\lfloor\frac{np_\r}{2}-\sqrt{\frac{n}{8}\ln\frac4\epsilon}\Bigr\rfloor
        \\[1.2ex]
        1,
        &
        q\geq 
        \frac{1}{np_\r}\Bigl\lfloor\frac{np_\r}{2}-\sqrt{\frac{n}{8}\ln\frac4\epsilon}\Bigr\rfloor
    \end{cases}\ .
    \label{eq: finite_modified_entropy}
\end{align} 
Appendix~\ref{choice of m proof} proves that $m$ is lower bounded by the final key length \eqref{eq:finite_m_entropy_bound}. This key length is positive only when $n_\r^*$ satisfies $2\lfloor 2n\tau(w)\rfloor+2\leq n_\r^*\leq n$.

Finally, we can use Hoeffding's inequality~\cite{Hoeffding1963} to bound the probability in \eqref{eq: term1 upperbound sum up to n_r*} as
\begin{align}
    2\Pr[N_\r\le n^*_\r]
    &=
    2\Pr[N_\r\le \lfloor np_\r-\delta\rfloor]
    \leq
    2\Pr[N_\r\le np_\r-\delta]
    \\
    &\le
    2\exp\left[-\frac{2\delta^2}{n}\right]
    \label{eq:hoeffding_bound_raw_count}
    \leq \frac \epsilon 2\ .
\end{align}
Under these choices, both terms in \eqref{eq:first second} are bounded by
$\epsilon/2$, and hence the total trace norm is at most $\epsilon$.

\section{Conclusion}
We have shown that randomness extraction in DI protocols can be implemented without a seed while remaining both efficiently implementable and high-rate. By introducing a truncation-based proof technique, we overcome the large statistical fluctuations that limited our previous work on seedless extractors \cite{seedless}, allowing the degree of Bell violation to be estimated using an asymptotically vanishing fraction of rounds. As a result, our protocol achieves the optimal rate of one key bit per singlet for the task of one-party DI randomness generation in the asymptotic limit, while using efficiently implementable linear functions as seedless extractors. This, however, comes with the trade-off that our analysis is tailored to the CHSH setting, arising from our use of Jordan's lemma–style techniques to reduce to qubits. Although our analysis can be directly extended to other Bell inequalities with binary inputs and outputs, such as the tilted CHSH inequalities in the bipartite scenario or the Mermin and Svetlichny inequalities in multipartite scenarios, extending our proof techniques to general Bell scenarios is nontrivial --- but we believe it should be possible, for instance by integrating our truncation approach at the level of correlation matrices in the Navascu\'es-Pironio-Ac\'in (NPA) hierarchy \cite{navascues2007bounding}. That said, since the CHSH inequality is central to many DI protocols, our results already apply to a wide range of tasks. It would be interesting to explore their potential in other settings \cite{Pironio_2009, Masanes2011, Pironio_2013, Vazirani_2014, pironio2010random, Colbeck_2011, Ac_n_2016, Colbeck_2012, brandao2016realistic, kessler2020device, foreman2023practical}, as well as whether they extend to non-DI scenarios with more trust \cite{pueyo2025deterministic}.

An important next step is to remove the assumption of memoryless measurement devices, extending our results to the fully general DI setting. Additionally, our result relies on a proof of the existence, rather than an explicit construction, of matrices $G$ satisfying~\eqref{extract condit}. This condition is equivalent to requiring that $G$ generate a linear error-correcting code with sufficiently large minimum distance, enabling explicit constructions based on such codes. A good candidate is the primitive narrow-sense Bose--Chaudhuri--Hocquenghem \cite{hocquenghem1959codes,bose1960class} codes, which admit quasilinear-time implementations~\cite{foreman2023thesis} but achieve a slightly worse key rate and do not support all values of $n_\r$.

\bigskip\noindent\textbf{Acknowledgements.} 
We acknowledge helpful discussions and comments from Erik Woodhead, Mafalda Almeida, Lewis Wooltorton, Tom\'{a}s Fern\'{a}ndez Martos and Gabriel Senno. SL acknowledges support from the EPSRC (grant number EP/S021582/1). LM acknowledges financial support from the EPSRC Prosperity Partnership in Quantum Software for Simulation and Modelling
(grant EP/S005021/1).

\bibliographystyle{unsrt}
\bibliography{library}

\appendix
\section{Proofs of the Lemmas}

\subsection{Proof of Lemma \ref{lemma:qubit redux}: Reduction to qubits}
\noindent\textbf{Lemma \ref{lemma:qubit redux}.}
Any classical-quantum state of the form
\begin{align}\label{def:cq-state ap}
  \rho_{\E,\a,\b|\x,\y}
  = \tr_{\A\B}\!\left(\! \rho_{\A\B\E} \prod_{i=1}^n A_i(a_i|x_i) B_i(b_i|y_i) \right)\ ,
\end{align}
with arbitrary Hilbert spaces $\A_i, \B_i, \E$ and binary inputs and outputs $a_i, x_i, b_i, y_i \in\ZO$ can be written as a mixture
  \begin{align}\label{eq:mixture A}
    \rho_{\E,\a,\b|\x,\y} = \sum_\mathbf {v,w} p(\mathbf{v,w})\, \rho_{\E,\a,\b|\x,\y,\mathbf{v,w}} \ , 
  \end{align}
  with the following property. For every $\mathbf v=(v_1, \ldots, v_n)$ and $\mathbf w=(w_1, \ldots, w_n)$, the corresponding state
  \begin{align}\label{def:cq-state qubit A}
    \rho_{\E,\a,\b|\x,\y,\mathbf{v,w}}
    = \tr_{\A'\B'}\!\left(\! \rho_{\A'\B'\E|\mathbf{v,w}} \prod_{i=1}^n A^{v_i}_i(a_i|x_i) B^{w_i}_i(b_i|y_i) \right)\ ,
  \end{align}
  is obtained by measuring qubits $\A' =\bigotimes_{i=1}^n \mathbb C^2$ and $\B' = \bigotimes_{i=1}^n \mathbb C^2$.

\begin{proof}
    By Naimark's Dilation Theorem \cite{naimark1943representation}, for each round $i$, we can introduce ancillary systems $\bar\A_i=\mathbb C^2$ and $\bar\B_i=\mathbb C^2$ with the state $\proj0_i$. The $n$-round system is then given by $\bar\A=\bigotimes_{i=1}^n\bar\A_i=(\mathbb C^2)^{\otimes n}$ and $\bar\B=\bigotimes_{i=1}^n\bar\B_i=(\mathbb C^2)^{\otimes n}$, with respective states $\sigma_{\bar\A} = \bigotimes_{i=1}^n \proj0_i$ and $\sigma_{\bar\B} = \bigotimes_{i=1}^n \proj0_i$, such that the POVMs $\bar\A=\bigotimes_{i=1}^n\bar\A_i=(\mathbb C^2)^{\otimes n}$ and $\bar\B=\bigotimes_{i=1}^n\bar\B_i=(\mathbb C^2)^{\otimes n}$ with respective states $\sigma_{\bar\A} = \bigotimes_{i=1}^n \proj0_i$ and $\sigma_{\bar\B} = \bigotimes_{i=1}^n \proj0_i$ such that the POVMs $A_i(a_i|x_i)$ and $B_i(b_i|y_i)$ are realised as projective measurements on the enlarged spaces
    \begin{align}
        \tilde\A_i=\A_i\otimes \bar\A_i\ , \quad \tilde\B_i=\B_i\otimes \bar\B_i\ .
    \end{align}
    Formally, let $\tilde\A=\bigotimes_{i=1}^n\tilde\A_i$ and $\tilde\B=\bigotimes_{i=1}^n\tilde\B_i$ and define 
    \begin{align}
        \tilde\rho_{\tilde\A\tilde\B\E}:=\rho_{\A\B\E}\otimes \sigma_{\bar\A}\otimes \sigma_{\bar\B}\ ,
    \end{align}
    then there exist projective measurements $\tilde A_i(a_i|x_i)$ on $\tilde \A_i$ and $\tilde B_i(b_i|y_i)$ on $\tilde \B_i$ that allow us to write the state \eqref{def:cq-state} as
    \begin{align}
        \rho_{\E,\a,\b|\x,\y} =\tr_{\tilde\A\tilde\B}\left(\tilde\rho_{\tilde\A\tilde\B\E}\prod_{i=1}^n \tilde A_i(a_i|x_i)\tilde B_i(b_i|y_i)\right)\ .
    \end{align}
    Following \cite{Masanes_2006}, for each $i$, by Jordan's lemma applied to Alice's two projectors $\tilde A_i(0|0)$ and $\tilde A_i(0|1)$ acting on $\A_i$, there exists a direct-sum decomposition $\tilde\A_i=\bigoplus_{v_i}\A_i^{v_i}$ where $\dim \A_i^{v_i}\leq 2$ such that both projectors are block diagonal with respect to this decomposition.
    Additionally, the identity $\tilde A_i(0|x_i) +\tilde A_i(1|x_i) =\unity$ implies that the same Jordan decomposition also block diagonalizes $\tilde A_i(1|x_i)$ for
    $x_i\in\{0,1\}$. This means that, by letting $P_i^{v_i}$ be the projector onto $\A_i^{v_i}$ and defining $A_i^{v_i}(a_i|x_i)
    :=
    P_i^{v_i}\tilde A_i(a_i|x_i)P_i^{v_i}$, we can write
    \begin{align}
    \tilde A_i(a_i|x_i)
    =
    \bigoplus_{v_i} A_i^{v_i}(a_i|x_i)\ .
    \end{align}
    Similarly, applying Jordan's lemma to Bob's projectors gives
    $\tilde{\B}_i
    =
    \bigoplus_{w_i}\B_i^{w_i}$, $
    \dim \B_i^{w_i}\leq 2$ 
    and $B_i^{w_i}(b_i|y_i)
    :=
    Q_i^{w_i}\tilde B_i(b_i|y_i)Q_i^{w_i}$, with $Q_i^{w_i}$ being the projector onto $\B_i^{w_i}$, so that
    \begin{align}
    \tilde B_i(b_i|y_i)
    =
    \bigoplus_{w_i} B_i^{w_i}(b_i|y_i)\ .
    \end{align}
    Next, let $\mathbf v=(v_1,\ldots,v_n)$, $\mathbf{w}=(w_1,\ldots,w_n)$ and define $P^{\mathbf v}:=\prod_{i=1}^n P_i^{v_i}$, $Q^{\mathbf w}:=\prod_{i=1}^n Q_i^{w_i}$. This allows us to write 
    \begin{align}
    \rho_{\E,\a,\b|\x,\y}
    &=
    \sum_{\mathbf v,\mathbf w}
    \tr_{\tilde{\A}\tilde{\B}}
    \left(
    P^{\mathbf v}Q^{\mathbf w}
    \tilde\rho_{\tilde{\A}\tilde{\B}\E}
    P^{\mathbf v}Q^{\mathbf w}
    \prod_{i=1}^n
    A_i^{v_i}(a_i|x_i)
    B_i^{w_i}(b_i|y_i)
    \right)\ 
    \end{align}
    and define the normalised state
    \begin{align}
    \rho_{\A'\B'\E|\mathbf v,\mathbf w}
    :=
    \frac{
    P^{\mathbf v}Q^{\mathbf w}
    \tilde\rho_{\tilde\A\tilde\B\E}
    P^{\mathbf v}Q^{\mathbf w}
    }{
    p(\mathbf v,\mathbf w)
    }\ ,
    \end{align}
    where $p(\mathbf v,\mathbf w) =\tr\left(P^{\mathbf v}Q^{\mathbf w}\tilde\rho_{\tilde\A\tilde\B\E}\right)$. This state is supported on $\A'=\bigotimes_{i=1}^n \A_i^{v_i}$ and $\B'=\bigotimes_{i=1}^n \B_i^{w_i}$. Since each $\A_i^{v_i}$ and $\B_i^{w_i}$ has dimension at most two, we may embed
    each block into a copy of $\mathbb C^2$, padding one-dimensional blocks
    arbitrarily. Thus each conditional state may be viewed as being generated by
    measurements on $\A'=(\mathbb C^2)^{\otimes n}$ and
    $\B'=(\mathbb C^2)^{\otimes n}$. Finally, by defining
    \begin{align}
    \rho_{\E,\a,\b|\x,\y,\mathbf v,\mathbf w}
    :=
    \tr_{\A'\B'}
    \left(
    \rho_{\A'\B'\E|\mathbf v,\mathbf w}
    \prod_{i=1}^n
    A_i^{v_i}(a_i|x_i)
    B_i^{w_i}(b_i|y_i)
    \right)\ , 
    \end{align} we arrive at
    \begin{align}
    \rho_{\E,\a,\b|\x,\y}
    =
    \sum_{\mathbf v,\mathbf w}
    p(\mathbf v,\mathbf w)
    \rho_{\E,\a,\b|\x,\y,\mathbf v,\mathbf w}\ .
    \end{align}
\end{proof}
We note that, in general, the direct sum decomposition is only possible when each party has just two projectors, which prevents the reduction to qubits in scenarios of more than two outcomes or more than two observables.

\subsection{Proof of Lemmas~\ref{lem: 1} and \ref{lem: 3}: Spectrum of Bell operators}
Recall that in every round $i$ Alice and Bob perform measurements $A_i(a_i|x_i)$ and $B_i(b_i|y_i)$ on the state $\rho_{\A\B\E}$ respectively, and we can consider these to be qubit observables using the results from the previous subsection. We remark that several of the results in this section are well known and can, for example, be inferred from \cite{Chefles_Barnett_1997,Braunstein_Mann_Revzen_1992}, but we include them here to make the presentation more self-contained.

The measurement statistics $p(\a,\b|\x,\y)$ are invariant under the application of local unitaries to Alice and Bob's systems. This means, without loss of generality, we can pick the local coordinate system for Alice (Bob) such that the measurement direction of the $x=0$ ($y=0$) setting aligns with the $x$-axis of the Bloch sphere, and use a single angle $\alpha \in (0, \pi]$ ($\beta \in (0, \pi]$) to specify the measurement direction of the $x=1$ ($y=1$) setting, relative to the $x$-axis. Formally, we define
\begin{align}\label{eq: alpha_beta A}
  \sigma_\alpha = \begin{pmatrix}
    0 & e^{i\alpha} \\
    e^{-i\alpha} & 0
  \end{pmatrix},
\end{align}
where $i$ is the imaginary unit. Define $A(a|0)$ for $a=0,1$ as the projectors onto the eigenvectors of $\sigma_0$, $A(a|1)$ those of $\sigma_\alpha$, $B(b|0)$ those of $\sigma_0$, and $B(b|1)$ those of $\sigma_\beta$, where $\alpha, \beta$ are free parameters chosen arbitrarily by the adversary and may vary for each round. With this parameterisation, we can write the CHSH operator as 
\begin{align}\label{eq: chsh_op_alpha_beta A}
    S(\alpha,\beta)=\sigma_0\otimes(\sigma_0+\sigma_\beta)+\sigma_\alpha\otimes(\sigma_0-\sigma_\beta)\ ,
\end{align}
By expanding \eqref{eq: chsh_op_alpha_beta A}, we can write 
\begin{align}
S(\alpha,\beta)&=\begin{pmatrix}
        & & & \mu\\
        & & \nu&\\
        & \bar\nu& &\\
        \bar\mu & & &
    \end{pmatrix}\ ,
\end{align}
where
\begin{align}
    \mu&=1+e^{i\beta}+(1-e^{i\beta})e^{i\alpha}\label{eq: mu}\ ,\\
    \nu&=1+e^{-i\beta}+(1-e^{-i\beta})e^{i\alpha}\label{eq: nu}\ ,
\end{align}
and $\bar\mu,\bar\nu$ denote their complex conjugates. It follows that the eigenvalues of $S(\alpha,\beta)$ are given by $\pm|\mu|,\pm|\nu|$, and, for $\alpha,\beta \in (0,\pi)$, the largest eigenvalue is
\begin{align}\label{eq: largest_CHSH_eigval A}
    |\mu|=2\sqrt{1+\sin\alpha\sin\beta}\ ,
\end{align}
second largest is 
\begin{align}\label{eq: 2nd_largest_CHSH_eigval A}
    |\nu|=2\sqrt{1-\sin\alpha\sin\beta}\ , 
\end{align}
with them equal whenever $\alpha = \pi$ or $\beta = \pi$.
Using (\ref{eq: mu}-\ref{eq: 2nd_largest_CHSH_eigval A}), we find that the eigenvectors (ordered by decreasing associated eigenvalue magnitude) are
\begin{align}
    \ket{\phi_{0,0}}&=\frac{1}{\sqrt 2}(\frac{\mu}{|\mu|}\ket{00}+\ket{11})\ ,\label{eq: eigenvectors_00}\\
    \ket{\phi_{1,0}}&=\frac{1}{\sqrt 2}(\frac{\nu}{|\nu|}\ket{01}+\ket{10})\ ,\\
    \ket{\phi_{1,1}}&=\frac{1}{\sqrt 2}(-\frac{\nu}{|\nu|}\ket{01}+\ket{10})\ ,\\
    \ket{\phi_{0,1}}&=\frac{1}{\sqrt 2}(-\frac{\mu}{|\mu|}\ket{00}+\ket{11})\label{eq: eigenvectors_11}\ .
\end{align}

\noindent\textbf{Lemma \ref{lem: 1}.} 
For $a,b,x,y \in \{0,1\}$ and any given qubit observables $A(a|x)$ and $B(b|y)$, the operator 
  \begin{align}\label{def:H_z lemma A}
    H(z)= \sum_{a,b,x,y}\frac 1 4 \,\delta^{z+1}_{a +b +x y}\, A(a|x) B(b|y)
  \end{align}
  has spectral decomposition
  \begin{align}\label{eq:H0 spectral A}
    H(z) = \sum_{t,s=0}^1 \lambda_{t,s+z+1} \proj{\phi_{t,s}}\ ,
  \end{align}
  with the sum $s + z + 1$ taken modulo 2, and $\ket{\phi_{t,s}}$ are the eigenvectors associated to the eigenvalues
  \begin{align}\label{eq:lambda0010 A}
    &\lambda_{0,0} = \theta\ , 
    &\lambda_{1,0} = \frac 1 2 + \frac {\sqrt 2} 4 \sqrt{8\theta(1-\theta)-1}\ , 
    \\ \label{eq:lambda0111 A}
    &\lambda_{0,1} = 1-\theta\ ,
    &\lambda_{1,1} = \frac 1 2 - \frac {\sqrt 2} 4 \sqrt{8\theta(1-\theta)-1}\ , 
  \end{align}
  for some $\theta \in (w_\c , w_\q]$, satisfying $\lambda_{0,0} \geq\lambda_{1,0} \geq\lambda_{1,1} \geq\lambda_{0,1}$.

\begin{proof}
Let $H_{z} := H(z)$. Using the parametrisation of $A(a|x)$ and $B(b|y)$ discussed below \eqref{eq: alpha_beta A}, we can write $H_z$ \eqref{def:H_z lemma A} in terms of $\alpha, \beta$. Decomposing the CHSH operator as
\begin{align}\label{eq: H0H1_relationship}
    S(\alpha,\beta) &= 4(H_1 - H_0) = 8H_1 - 4\unity = 4\unity - 8H_0\ ,
\end{align}
where the last two equalities follow from $H_1 + H_0 = \unity$, we see that $H_0$ and $H_1$ share the eigenvectors of $S(\alpha,\beta)$, namely $\ket{\phi_{0,0}}, \ket{\phi_{1,0}}, \ket{\phi_{1,1}}, \ket{\phi_{0,1}}$ 
from (\ref{eq: eigenvectors_00}–\ref{eq: eigenvectors_11}).

Let $\lambda_{0,0},\lambda_{1,0},\lambda_{1,1},\lambda_{0,1}$ denote the eigenvalues of $H_1$ corresponding to the eigenvectors $\ket{\phi_{0,0}},\ket{\phi_{1,0}},\ket{\phi_{1,1}},\ket{\phi_{0,1}}$. Using \eqref{eq: H0H1_relationship}, the eigenvalues of $H_1$ can be expressed in terms of those of $S(\alpha,\beta)$ as
\begin{align}\label{eq: H1_eigenvalues00}
    \lambda_{0,0}&=\frac{4+|\mu|}{8}=\frac{1}{4}(2+\sqrt{1+\sin\alpha\sin\beta})\ ,\\
    \lambda_{0,1}&=\frac{4-|\mu|}{8}=\frac{1}{4}(2-\sqrt{1+\sin\alpha\sin\beta})\ ,\\
    \lambda_{1,0}&=\frac{4+|\nu|}{8}=\frac{1}{4}(2+\sqrt{1-\sin\alpha\sin\beta})\ ,\\
    \lambda_{1,1}&=\frac{4-|\nu|}{8}=\frac{1}{4}(2-\sqrt{1-\sin\alpha\sin\beta})\ ,\label{eq: H1_eigenvalues11}
\end{align} 
and note that $\lambda_{0,0} + \lambda_{0,1} = \lambda_{1,0} + \lambda_{1,1} = 1$. Since $\sin\alpha \sin\beta \geq 0$ for all $\alpha, \beta \in (0, \pi]$, we have $\lambda_{0,0} \geq \lambda_{1,0} \geq \lambda_{1,1} \geq \lambda_{0,1}$. Setting $\lambda_{0,0} = \theta$ for some $\theta \in (w_\c, w_\q]$, the eigenvalues of $H_1$ (\ref{eq: H1_eigenvalues00}–\ref{eq: H1_eigenvalues11}) take the form (\ref{eq:lambda0010 A}–\ref{eq:lambda0111 A}).

The eigenvalues of $H_0$ corresponding to the eigenvector $\ket{\phi_{t,s}}$ for any $(t,s) \in \{0,1\}^2$ are $1 - \lambda_{t,s}$ (as $H_1 + H_0 = \unity$). This allows us to write the spectral decomposition of $H_z$ as
\begin{align}
    H_z = \sum_{t,s=0}^1 \lambda_{t,s+z+1} \proj{\phi_{t,s}}\ ,
\end{align}
with $s+z+1$ taken modulo 2, completing the proof.
\end{proof}

\noindent\textbf{Lemma \ref{lem: 3}.}
 For $a,b,x,y \in \{0,1\}$ and any given qubit observables $A(a|x)$ and $B(b|y)$, the eigenvectors $\ket{\phi_{t,s}}$ defined in \eqref{eq:H0 spectral A} satisfy
  \begin{align}
    \left| \bra{\phi_{t,s}} C \ket{\phi_{t',s'}} \right| \leq \delta_{t+1}^{t'}\ ,
  \end{align}
  for all $t,t',s,s'\in \{0, 1\}$.

\begin{proof}
Using the $\alpha,\beta$ parametrisation of Alice's projectors below \eqref{eq: alpha_beta A},
\begin{align}
    C=A(0|0)-A(1|0)=\sigma_0= \begin{pmatrix}
    0 & 1 \\
    1 & 0
\end{pmatrix}\ .
\end{align}
Let $\ket{\phi_{t,s}}$ be the eigenvectors from~(\ref{eq: eigenvectors_00}–\ref{eq: eigenvectors_11}).  
Since they are normalised and $\sigma_0$ has operator norm 1, we have 
$\bigl| \bra{\phi_{t,s}} \sigma_0 \ket{\phi_{t',s'}} \bigr| \leq 1$ for all $t,t',s,s' \in \{0,1\}$.  
By direct computation, for all $t,s,s' \in \{0,1\}$, we find $\bra{\phi_{t,s}} \sigma_0 \ket{\phi_{t,s'}} = 0$.
Combining these facts gives the desired bound,
\begin{align}
    \left| \bra{\phi_{t,s}} C \ket{\phi_{t',s'}} \right| \leq \delta_{t+1}^{t'}\ ,
  \end{align}
for all $t,t',s,s'\in \{0, 1\}$, noting that the algebra of the Kronecker delta indices is performed modulo 2.
\end{proof}

\subsection{Proof of Lemma \ref{lem: 2}}

\noindent\textbf{Lemma \ref{lem: 2}.}
For $\xi_{t,s}$ defined in \eqref{eq:xi} and any $(\t,\s)\in\{0,1\}^{2n}$, there exists a $\gamma$ such that
\begin{align}
\sum_{t,s=0}^1 \mu_{t,s}(\t,\s)\,\xi_{t,s} - 1
\le
\begin{cases}
-\Delta^2, & \text{if } \Delta \ge 0\\
0, & \text{otherwise}
\end{cases}\ ,
\end{align}
where 
\begin{align}
\Delta
= w_\e - \frac{1}{2}
- \frac{1}{4}\Bigl(\sqrt{2}\,\mu_{0,0}(\t,\s)
- \mu_{0,1}(\t,\s)
+ \mu_{1,0}(\t,\s)\Bigr)
\end{align}
for $w_\e\in (w_\c,w_\q]$. 

\begin{proof}
Recall that $\mu_{t,s}(\t,\s)$ is the relative frequency of the pair $(t,s)\in \ZO^2$ in the vector $(\t,\s)\in \ZO^{2n}$ and that $\xi_{t,s}  = \max_{\theta \in (w_\c , w_\q]} \left(e^{\gamma (1-w_\e)} \lambda_{t,s}(\theta) +e^{-\gamma w_\e}\lambda_{t,s+1}(\theta) \right)$
with $\lambda_{t,s}(\theta) = \lambda_{t,s}$ given in (\ref{eq:lambda0010}-\ref{eq:lambda0111}). The solutions of $\xi_{t,s}$ are 
\begin{align}
  \label{eq:xi00 xi10}
  &\xi_{0,0} =
  e^{-\gamma w_\e} \left[ (e^{\gamma}-1)w_\q +1 \right] \ , 
  &\xi_{1,0} = e^{-\gamma w_\e} \frac {3e^{\gamma}+1} 4\ ,
  \\ \label{eq:xi01 xi11}
  &\xi_{0,1} = e^{-\gamma w_\e} \frac {3+e^{\gamma}} 4\ ,
  &\xi_{1,1} = e^{-\gamma w_\e} \frac{1+e^{\gamma}} 2 \ .
\end{align}

Let $\xi=\sum_{t,s=0}^1 \mu_{t,s}(\t,\s) \xi_{t,s} -1$. Using the normalisation of $\{\mu_{t,s}(\t,\s)\}_{t,s}$, we can write 
\begin{align}\label{eq: min_beta}
  \xi
  = 
   \left[\frac 1 4 e^{\gamma(1-w_\e)}(2+\eta)
  +\frac 1 4 e^{-\gamma w_\e}(2-\eta) -1\right]\ ,
\end{align}
where $\eta = \sqrt 2 \mu_{0,0}(\t,\s) -\mu_{0,1}(\t,\s) +\mu_{1,0}(\t,\s)$. Differentiating \eqref{eq: min_beta} with respect to $\gamma$, we obtain
\begin{align}
    \frac{\partial\xi}{\partial\gamma}&=\frac{1-w_\e}{4}e^{\gamma(1-w_\e)}(2+\eta)+\frac{-w_\e}{4}e^{-\gamma w_\e}(2-\eta)\ .
\end{align}
Equating this to zero, we obtain the solution
\begin{align} \label{eq:beta0}
  \gamma_0 = \ln \frac {w_\e(2-\eta)} {(1-w_\e)(2+\eta)}\ .
\end{align}
Whenever $w_\e(2-\eta) \geq (1-w_\e)(2+\eta)$ (which is equivalent to $\eta \leq 4 w_\e - 2$), $\gamma_0$ is non-negative. We now consider two cases, $(i)$ when $\eta \leq 4 w_\e - 2$ and $(ii)$ when $\eta > 4 w_\e - 2$.

Case $(i)$: when $\eta \leq 4 w_\e - 2$. In this case, $\gamma_0$ minimises \eqref{eq: min_beta}. Note that the second derivative of \eqref{eq: min_beta} is always positive since $- 1 \leq \eta \leq \sqrt{2}$. Substituting \eqref{eq:beta0} back into $\xi$ then gives
\begin{align}\label{eq: beta_independent}  
  \xi=&\frac{2+\eta}{4}\left(\frac{w_\e}{1-w_\e}\frac{2-\eta}{2+\eta}\right)^{1-w_\e}+\frac{2-\eta}{4}\left(\frac{w_\e}{1-w_\e}\frac{2-\eta}{2+\eta}\right)^{-w_\e}-1
  \\=&
  \frac{4 w_\e -\Delta'}{4w_\e} \left(\frac  {(1-w_\e)(4w_\e -\Delta')}{w_\e(4-4w_\e+\Delta')}\right)^{w_\e-1} -1\ ,
\end{align}
where $\Delta'=4w_\e-2-\eta \geq 0$. This can be further simplified to
\begin{align}
\xi=&
  \frac{4 w_\e -\Delta'}{4w_\e} \left(\frac {1-w_\e}{w_\e}\frac{4w_\e}{4(1-w_\e)}\frac{1-\frac{\Delta'}{4w_\e}}{1+\frac{\Delta'}{4(1-w_\e)}}\right)^{w_\e-1} -1\\
  =&
  \frac{4 w_\e -\Delta'}{4w_\e} \left(\frac{1-\frac{\Delta'}{4w_\e}}{1+\frac{\Delta'}{4(1-w_\e)}}\right)^{w_\e-1} -1\\
  =&
  \left(1-\frac{\Delta'}{4w_\e}\right)^{w_\e}\left(1+\frac{\Delta'}{4(1-w_\e)}\right)^{1-w_\e} -1\\
\leq& \left(1-\frac {\Delta'} 4\right)\left(1+\frac {\Delta'} 4\right)-1=-\frac{\Delta'^2}{16}
\end{align}
where the last inequality follows from the fact that for $x\geq0$, the two inequalities $(1\mp x)^y\leq 1\mp xy$ hold whenever $0\leq y\leq 1$ and $\pm x\leq 1$. We require $\Delta'\leq 4w_\e$ for $\left(1-\frac{\Delta'}{4w_\e}\right)^{w_\e}$ to be real.
By defining $\Delta=\frac {\Delta'} 4$, we are done for the case of $\eta \leq 4w_\e-2$. \\

Case $(ii)$: when $\eta>4w_\e-2$. This case can be equivalently expressed as $\Delta<0$, and we can simply substitute $\gamma=0$ into \eqref{eq: min_beta} to arrive at $\xi=0$.
\end{proof}

\subsection{Proof of Lemma \ref{lem: 4}}
\noindent\textbf{Lemma \ref{lem: 4}.}
If the relative frequencies $\mu_{t,s}(\mathbf{t},\mathbf{s})$ associated with $(\mathbf{t},\mathbf{s}) \in \{0,1\}^{2n}$ satisfy
\begin{align}
4w - 2 \leq
\sqrt{2}\,\mu_{0,0}(\mathbf{t},\mathbf{s})
- \mu_{0,1}(\mathbf{t},\mathbf{s})
+ \mu_{1,0}(\mathbf{t},\mathbf{s})\ ,
\end{align}
for some $w \in (w_\c, w_\q]$, then
\begin{align}
\mu_{1,0}(\mathbf{t},\mathbf{s}) + \mu_{1,1}(\mathbf{t},\mathbf{s})
\leq \tau(w)
:= \frac{\sqrt{2} + 2 - 4w}{\sqrt{2} - 1}\ .
\end{align}

\begin{proof}
By definition, for any $(\t,\s) \in \{0,1\}^{2n}$ we have
\begin{align}
  \frac {|\t|} {n} 
  &= \mu_{1,0}(\t,\s) +\mu_{1,1}(\t,\s)\ ,\\
  \frac {|\s|} {n} 
  &= \mu_{0,1}(\t,\s) +\mu_{1,1}(\t,\s)\ .
\end{align}
Additionally, we have the normalisation condition $\sum_{t,s}\mu_{t,s}(\t,\s)=1$, 
from which we can write
\begin{align}
    &\frac {|\t|-|\s|} {n} 
  =-\mu_{0,1}(\t,\s)+\mu_{1,0}(\t,\s)\\
  &\frac {|\t|+|\s|-\t\cdot\s} {n} 
  = 1-\mu_{0,0}(\t,\s)\ ,
\end{align}
where $\t \cdot \s /n = \mu_{1,1}(\t, \s)$, noting that the dot product $\t \cdot \s$ is not taken modulo 2.
Substituting the above into $\sqrt 2 \mu_{0,0}(\t,\s) -\mu_{0,1}(\t,\s) +\mu_{1,0}(\t,\s) \geq 4w-2$, we arrive at
\begin{align}
    \sqrt{2}-\frac{(\sqrt{2}-1)|\t|+(\sqrt{2}+1)|\s|-\sqrt{2}\ \t\cdot\s}{n}&\geq 4w-2\\
    \to \label{eq:inequalityA51}\frac{(\sqrt{2}-1)|\t|+(\sqrt{2}+1)|\s|-\sqrt{2}\ \t\cdot\s}{n}&\leq \sqrt{2}-(4w-2)\ .
\end{align}
Since $(\sqrt{2}+1)|\s|-\sqrt{2}\ \t\cdot\s\geq 0$ for all $\t$ and $\s$ with equality when $\s = \mathbf{0}$, 
by manipulating the inequality \eqref{eq:inequalityA51}, we can conclude
\begin{align}
    \mu_{1,0}(\t,\s)+\mu_{1,1}(\t,\s) = \frac{|\t|}{n} 
    \leq \frac {\sqrt 2 -4w+2} {\sqrt 2 -1}\ .
\end{align}
\end{proof}

\subsection{Proof of Lemma~\ref{lemma: indicator}}

\noindent\textbf{Lemma \ref{lemma: indicator}.}
Let $G$ be any full row-rank $m\times n_\r$ binary matrix and define the indicator function
  \begin{align} \label{def:indicator A}
    I_G(\br) =
    \begin{cases}
    1, & \text{if } \br \in \Span G \text{ and } \br\neq 0 \\
    0, & \text{otherwise}
    \end{cases} ,
  \end{align}
  where $\Span G$ is the row space of $G$ (modulo 2). 
  The identity
  \begin{align} \label{eq:lemma-sum A}
    2^{-n_\r} \sum_{\a_\r} \left( \delta_{G\a_\r}^{\k} -2^{-m}\right)(-1)^{\a_\r \cdot \br}
    = \pm 2^{-m} I_G(\br)
  \end{align}
  holds for any $\k \in \{0,1\}^m$ and $\br \in \{0,1\}^{n_\r}$.

\begin{proof}
    Since $G$ has full row-rank, for every $\k\in\{0,1\}^m$ there exists $\a_\k\in \{0,1\}^{n_\r}$ such that
    $G\a_\k=\k$. Moreover, $\delta_{G\a_\r}^{\k}=1$ for any $\a_\r$ implies $\a_\r + \a_\k\in\ker G$, where we note that the vector addition is elementwise modulo 2.
    Hence, for any $\k$, 
    \begin{align}
        2^{-n_\r}\sum_{\a_\r}\delta_{G\a_\r}^{\k}(-1)^{\a_\r\cdot\br}
        &=2^{-n_\r}\sum_{\a_\r:\,\a_\r + \a_\k\in\ker G}(-1)^{\a_\r\cdot\br} \nonumber\\
        &=2^{-n_\r}(-1)^{\a_\k\cdot \mathbf{r}}\sum_{\bv\in\ker G}(-1)^{\bv\cdot\br}. \label{eq:character-sum}
    \end{align}
    By the rank--nullity theorem, $|\ker G| = 2^{n_\r - m}$. The sum $\sum_{\bv\in\ker G} (-1)^{\bv\cdot\br}$ from~\eqref{eq:character-sum} equals $2^{n_\r - m}$ if $\bv\cdot\br = 0 \!\!\mod 2$ for all $\bv \in \ker G$. Otherwise, it is $0$: if there exists $\bv_0 \in \ker G$ such that $\bv_0 \cdot \br \neq 0 \!\!\mod 2$, then the terms cancel in pairs, since for every $\bv \in \ker G$ the pair $(\bv, \bv + \bv_0)$ contributes $(-1)^{\bv\cdot\br} + (-1)^{(\bv + \bv_0)\cdot\br} = 0$. The condition $\bv\cdot\br = 0 \!\!\mod 2$ for all $\bv \in \ker G$ is equivalent to $\br \in \Span G$. Therefore,
    \begin{align}
    2^{-n_\r}\sum_{\a_\r}\delta_{G\a_\r}^{\k}(-1)^{\a_\r\cdot\br}
    =
    \begin{cases}
    2^{-m}(-1)^{\a_\k\cdot\br} & \text{if } \br\in\Span G,\\
    0 & \text{otherwise}.
    \end{cases}
    \end{align}
    Finally, for $\br\neq 0$,
    \begin{align}
    2^{-n_\r}\sum_{\a_\r}2^{-m}(-1)^{\a_\r\cdot\br}=0,
    \end{align}
    which gives
    \begin{align}
    2^{-n_\r}\sum_{\a_\r}\bigl(\delta_{G\a_\r}^{\k}-2^{-m}\bigr)(-1)^{\a_\r\cdot\br}
    =
    (-1)^{\a_\k\cdot\br}\,2^{-m} I_G(\br)=\pm 2^{-m} I_G(\br).
    \end{align}
\end{proof}

\subsection{Proof of Lemma~\ref{lem: 5}}
The sum in \eqref{eq: weight_vanish_condition2 A} counts the vectors of Hamming weight $k$ in $\Span G$. We show that the average value of this sum over uniformly random $m$-dimensional subspaces is less than one given $k$ satisfies \eqref{eq: weight_vanish_condition A}. Hence, at least one such subspace contains no vector of weight $k$.

\noindent\textbf{Lemma \ref{lem: 5}.}
For any pair of positive integers $(m, n_\r)$ with $m< n_\r$ and any $k \in \{1, \ldots, n_{\r}\}$, there exists a full row-rank $m\times n_\r$ matrix $G$ such that
  \begin{align}\label{eq: weight_vanish_condition2 A}
    \sum_{\br} I_G(\br)\,\delta_{|\br|}^{k} = 0
\end{align} 
whenever
\begin{align}\label{eq: weight_vanish_condition A}
    m < n_\r - \log_2 \binom{n_\r}{k} - \log_2(2n_\r)\ ,
\end{align}
where the function $I_G(\br)$ is defined in \eqref{def:indicator A}. 

\begin{proof}
Choose an $m$-dimensional subspace of $\{0,1\}^{n_\r}$ uniformly at random, and let the rows of $G$ be any basis of this subspace. Then $G$ has full row rank. Every $m$-dimensional subspace contains exactly $2^m-1$ nonzero vectors. By symmetry, every nonzero vector $\br\in\{0,1\}^{n_\r}$ therefore satisfies
\begin{align}
    \Pr\bigl[\br\in\Span G\bigr]
    =
    \frac{2^m-1}{2^{n_\r}-1}
    <
    2^{m-n_\r}\ ,
\end{align}
where the final inequality holds for all $m<n_\r$.
Since there are $\binom{n_\r}{k}$ vectors of weight $k$, by linearity of expectation over uniformly random $m$-dimensional subspaces,
\begin{align}
    \mathbb E\left[
        \sum_{\br}I_G(\br)\,\delta_{|\br|}^{k}
    \right]
    &=
    \binom{n_\r}{k}\frac{2^m-1}{2^{n_\r}-1} \\
    &<
    2^{m-n_\r}\binom{n_\r}{k}\ .
\end{align}
Condition \eqref{eq: weight_vanish_condition A} implies
\begin{align}
    2^{m-n_\r}\binom{n_\r}{k}
    <
    \frac{1}{2n_\r}
    <
    1\ .
\end{align}
Thus, the expected number of weight-$k$ vectors in $\Span G$ is less than one. Since this number is a nonnegative integer, it must equal zero for at least one choice of $G$. Hence, there exists a full row-rank $m\times n_\r$ matrix $G$ such that
\begin{align}
    \sum_{\br}I_G(\br)\,\delta_{|\br|}^{k}=0\ ,
\end{align}
as required.
\end{proof}

\subsection{Proof of Lemma~\ref{lem: new_8}}
\label{proof of lem 8}
\noindent\textbf{Lemma \ref{lem: new_8}.}
For any $n_\r^* \in \{0, \ldots, n\}$ such that $2\lfloor 2n\tau(w)\rfloor+2\leq n_\r^* \leq n$ and $m < M(n_\r^*)$, if $m \geq M(n_\r)$, then $n_\r\leq n_\r^*$

\begin{proof}
The proof follows by first showing that $M(n_\r)$ is monotone increasing in the regime $2\lfloor 2n\tau(w)\rfloor+2\leq n_\r \leq n$.
Recall that 
\begin{align}\label{eq: M_def A}
        M(n_\r) = \begin{cases}
            n_\r
        -
        \log_2
        \binom{n_\r}{ \lfloor \min\{2n\tau(w),n_\r/2\}\rfloor}
        -
        \log_2(2n_\r), & n_\r \geq 1\\
        0, & \text{otherwise} 
        \end{cases}\ .
    \end{align}
$M(n_\r)$ is monotone increasing if 
\begin{align}\label{eq: monoton_condition}
    M(n_\r+1)-M(n_\r)\geq 0\ .
\end{align}
First, we note that our range of interest for the lemma, $2\lfloor 2n\tau(w)\rfloor+2\leq n_\r$, is equivalent to $\lfloor 2n\tau(w)\rfloor<\lfloor n_\r/2\rfloor$. To see this, write $n_\r\geq 2\lfloor 2n\tau(w)\rfloor+2$ as $n_\r/2\geq \lfloor 2n\tau(w)\rfloor+1$, and since the right-hand side is an integer, this condition is equivalent to $\lfloor n_\r/2\rfloor \geq \lfloor 2n\tau(w)\rfloor+1$, i.e., $\lfloor n_\r/2\rfloor > \lfloor 2n\tau(w)\rfloor$. Moreover, since $2\lfloor 2n\tau(w)\rfloor+2\leq n_\r$ implies $n_\r \geq 1$, it follows from \eqref{eq: M_def A} that $M(n_\r)=n_\r-\log_2\binom{n_\r}{\lfloor 2n\tau(w)\rfloor}-\log_2 (2n_\r)$ and we can write the condition \eqref{eq: monoton_condition} as
\begin{align}
    &1+\log_2\left[\binom{n_\r}{\lfloor 2n\tau(w)\rfloor}\binom{n_\r+1}{\lfloor 2n\tau(w)\rfloor}^{-1}\right]+\log_2\frac{n_\r}{n_\r+1}\geq 0\\
    &\implies \log_2\frac{n_\r(n_\r+1-\lfloor 2n\tau(w)\rfloor)}{(n_\r+1)^2}\geq  -1\\
    &\implies n_\r^2-2\lfloor 2n\tau(w)\rfloor n_\r-1\geq 0\ , 
\end{align}
which solves to
\begin{align}
    n_\r&\geq \lfloor 2n\tau(w)\rfloor+\sqrt{\lfloor 2n\tau(w)\rfloor^2+1}\\
    &> 2\lfloor 2n\tau(w)\rfloor\ .
\end{align}
Since $n_\r$ is an integer, this is equivalent to the requirement $n_\r\geq 2\lfloor 2n\tau(w)\rfloor+1$.
The set of $n_\r$ satisfying $2\lfloor 2n\tau(w)\rfloor+2\leq n_\r \leq n$ satisfies $n_\r\geq 2\lfloor 2n\tau(w)\rfloor+1$, so $M(n_\r)$ is monotone increasing in the regime $2\lfloor 2n\tau(w)\rfloor+2\leq n_\r \leq n$.

Finally, assume that $M(n_\r)\leq m$ but $n_\r>n_\r^*$ for some $n_\r^*$, then by the monotonicity of $M(n_\r)$ in the range $2\lfloor 2n\tau(w)\rfloor+2\leq n_\r \leq n$, we have $M(n_\r)\geq M(n_\r^*)>m$, contradicting our assumption and completing the proof.
\end{proof}

\subsection{The final key length lower bounds $m$}\label{choice of m proof}
\noindent Recall that
\begin{align}
    M(n_\r) = \begin{cases}
        n_\r
    -
    \log_2
    \binom{n_\r}{ \lfloor \min\{2n\tau(w),n_\r/2\}\rfloor}
    -
    \log_2(2n_\r), & n_\r \geq 1\\
    0, & \text{otherwise} 
    \end{cases}\ ,
\end{align}
and, for any $n_\r^* \in \{0, \ldots, n\}$,
\begin{align}
    m =\begin{cases}
        \lfloor M(n_\r^*)\rfloor-1, &\text{ if }2\lfloor 2n\tau(w)\rfloor+2\leq n_\r^* \leq n\\
     0, &\text{ otherwise}\ 
    \end{cases} \ ,
\end{align}
where $\tilde h$ is our modified binary entropy function
\begin{align}
    \tilde h(q)
    :=
    \begin{cases}
        h(q),
        &
        0
    \le
    q< \frac{1}{np_\r}\Bigl\lfloor\frac{np_\r}{2}-\sqrt{\frac{n}{8}\ln\frac4\epsilon}\Bigr\rfloor
        \\[1.2ex]
        1,
        &
        q\geq 
        \frac{1}{np_\r}\Bigl\lfloor\frac{np_\r}{2}-\sqrt{\frac{n}{8}\ln\frac4\epsilon}\Bigr\rfloor
    \end{cases}\ .
    \label{eq: finite_modified_entropy A}
\end{align}
We claim that, for $n_\r^*=\lfloor np_\r-\sqrt{\frac n 2 \ln\frac 4 \epsilon}\rfloor$, 
the following inequality holds
\begin{align}\label{eq: finite_m A}
    \Bigg\lfloor np_\r
    \left[
        1-
        \tilde h\!\left(
            \frac{2\tau(w)}{p_\r}
        \right)
    \right]
    -
    \sqrt{\frac n2\ln\frac4\epsilon}
    -
    \log_2\!\left(np_\r\right)-4 \bigg\rfloor \leq m \ .
\end{align}

\begin{proof}
We prove this by considering two cases, case 1, where $n_\r^*
<
2\lfloor 2n\tau(w)\rfloor+2$ (i.e., when $m=0$), and case 2, where $n_\r^*
\ge
2\lfloor 2n\tau(w)\rfloor+2$.\\

\noindent \textbf{Case 1: $n_\r^* < 2\lfloor 2n\tau(w)\rfloor+2$.} Since $\left\lfloor \frac{n_\r^*}{2}\right\rfloor$ is an integer, $2n\tau(w)<\left\lfloor \frac{n_\r^*}{2}\right\rfloor$ is equivalent to $\lfloor 2n\tau(w)\rfloor+1\leq \left\lfloor \frac{n_\r^*}{2}\right\rfloor$, which is equivalent to $2\lfloor 2n\tau(w)\rfloor+2\leq n_\r^*$ (since $\lfloor 2n\tau(w)\rfloor$ is also an integer), so we have
\begin{align}\label{eq: equivalent_conditions}
    \frac{2\tau(w)}{p_\r}<\frac{1}{np_\r}\left\lfloor \frac{n_\r^*}{2}\right\rfloor\iff n_\r^*
    \geq 
    2\lfloor 2n\tau(w)\rfloor+2\ .
\end{align} 
This implies that, if $n_\r^*<2\lfloor 2n\tau(w)\rfloor+2$, then $\tilde{h}\left(\frac{2\tau(w)}{p_\r}\right)=1$, and the left-hand side of \eqref{eq: finite_m A} becomes
    \begin{align}
    -
    \sqrt{\frac n2\ln\frac4\epsilon}
    -
    \log_2\!\left(np_\r\right)-4 \leq 0\ ,
\end{align}
and since $m=0$, the inequality holds.\\

\noindent\textbf{Case 2: $n_\r^* \geq 2\lfloor 2n\tau(w)\rfloor+2$.} 
In this case, 
    \begin{align}
    \label{eq: m start}
        m
        \geq n_\r^*-\log_2\binom{n_\r^*}{\lfloor 2n\tau(w)\rfloor}-\log_2\left(2n_\r^* \right)-2\ ,
    \end{align}
    and we lower bound the three $n_\r^*$-dependent terms separately.
    \begin{enumerate}
        \item First term: 
        \begin{align}
            n_\r^* = \left\lfloor np_\r-\sqrt{\frac n 2 \ln\frac 4 \epsilon}\right\rfloor \geq np_\r-\sqrt{\frac n 2 \ln\frac 4 \epsilon}-1\ .
        \end{align}
        \item Second term:
        We can write
        \begin{align}
            \log_2\binom{n_\r^*}{\lfloor 2n\tau(w)\rfloor} &=
            \log_2\binom{\lfloor np_\r-\sqrt{\frac n 2 \ln\frac 4 \epsilon}\rfloor}{\lfloor 2n\tau(w)\rfloor} \\
            &\leq \left\lfloor np_\r-\sqrt{\frac n 2 \ln\frac 4 \epsilon}\right\rfloor h\left(\frac{\lfloor 2n\tau(w)\rfloor}{\lfloor np_\r-\sqrt{\frac n 2 \ln\frac 4 \epsilon}\rfloor}\right)\ ,
        \end{align}
        where $h$ is the binary entropy function. This tail bound holds since we are in case 2, i.e., $\lfloor \frac{\lfloor np_\r-\sqrt{\frac n 2 \ln\frac 4 \epsilon}\rfloor}{2}\rfloor >\lfloor 2n\tau(w)\rfloor$, which implies $\frac{\lfloor 2n\tau(w)\rfloor}{\lfloor np_\r-\sqrt{\frac n 2 \ln\frac 4 \epsilon}\rfloor} < 1/2$. Using the fact that $qh(\frac{q'}{q})$ is non-decreasing in $q$ for a fixed $q'$,
        \begin{align}
            \left\lfloor np_\r-\sqrt{\frac n 2 \ln\frac 4 \epsilon}\right\rfloor h\left(\frac{\lfloor 2n\tau(w)\rfloor}{\lfloor np_\r-\sqrt{\frac n 2 \ln\frac 4 \epsilon}\rfloor}\right) 
            &\leq np_\r h\left(\frac{\lfloor 2n\tau(w)\rfloor}{np_\r}\right) \\
            &\leq np_\r \tilde{h}\left(\frac{\lfloor 2n\tau(w)\rfloor}{np_\r}\right)\\
            &\leq np_\r \tilde{h}\left(\frac{2\tau(w)}{p_\r}\right)\ ,
        \end{align}
        where the second to last inequality uses the fact that $h(q)\leq \tilde{h}(q)$ for all $0\leq q\leq 1$ and the final inequality follows from the fact that $\tilde h(q)$ is non-decreasing in $q$. Therefore, we have
        \begin{align}
            -\log_2\binom{n_\r^*}{\lfloor 2n\tau(w)\rfloor}\geq -np_\r \tilde{h}\left(\frac{2\tau(w)}{p_\r}\right)\ .
        \end{align}
        \item Third term:
        \begin{align}
            -\log_2\left(2n_\r^*\right) &= -\log_2\left(2\left\lfloor np_\r-\sqrt{\frac n 2 \ln\frac 4 \epsilon}\right\rfloor\right)\\ 
            &\geq -\log_22np_\r\\
            &=-\log_2np_\r-1\ .
        \end{align}
    \end{enumerate}
    Replacing the $n_\r^*$-dependent terms on the right-hand side of \eqref{eq: m start} with the three bounds above and taking the floor completes the proof in this case.
\end{proof}

\end{document}